\documentclass[aps, prx, reprint, amsmath,amssymb,showpacs,floatfix,longbibliography, twocolumn, superscriptaddress,nofootinbib]{revtex4-2}
\usepackage{times} 
\usepackage{graphicx}
\usepackage{dcolumn}
\usepackage{bm}
\usepackage{color}
\usepackage{xcolor}
\usepackage{hyperref}
\usepackage{enumitem}
\usepackage{cancel}

\usepackage{tikz-cd}

\usepackage{comment}
\hypersetup{colorlinks=true,citecolor=blue,linkcolor=blue, urlcolor=blue}
\hypersetup{linktocpage}
\newcolumntype{M}[1]{>{\centering\arraybackslash}m{#1}}
\newcolumntype{N}{@{}m{0pt}@{}}
\usepackage{environ}
\usepackage{MnSymbol}
\usepackage{CJKutf8}
\usepackage{footmisc}
\usepackage[bb=libus]{mathalpha}
\usepackage{makecell}

\bibpunct{[}{]}{,}{n}{}{}

\usepackage{amsfonts,amssymb,amsmath}
\usepackage[T1]{fontenc}
\usepackage{tikz}

\let\originalleft\left
\let\originalright\right
\renewcommand{\left}{\mathopen{}\mathclose\bgroup\originalleft}
\renewcommand{\right}{\aftergroup\egroup\originalright}

\newcommand{\zw}[1]{{#1}}

\usepackage{amsthm}

\newtheorem{theorem}{Theorem}

\theoremstyle{definition}
\newtheorem{definition}{Definition}[section]

\newtheorem{example}{Example}

\NewEnviron{eqs}{%
\begin{equation}\begin{split}
    \BODY
\end{split}\end{equation}
}

\begin{document}
\begin{CJK*}{UTF8}{gbsn}
\title{A framework for low-overhead quantum fault tolerance via spacetime lifting}
\author{Yijia Xu (许逸葭)}
\email{yijia@terpmail.umd.edu}
\affiliation{Joint Center for Quantum Information and Computer Science, University of Maryland, College Park,
Maryland 20742, USA}

\author{Yixu Wang}
\email{wangyixu@simis.cn}
\affiliation{Shanghai Institute for Mathematics and Interdisciplinary Sciences (SIMIS), Shanghai 200433, China}

\author{Zi-Wen Liu}
\email{zwliu0@tsinghua.edu.cn}
\affiliation{Yau Mathematical Sciences Center, Tsinghua University, Beijing, 100084, China }

\date{\today}

\begin{abstract}
Fault-tolerant quantum computation is inherently a spacetime problem, requiring not merely good static quantum error-correcting codes but also low-overhead protocols for protecting and manipulating encoded quantum information over time. Fault complexes provide a homological formalism for treating such protocols as single spacetime objects. Here we initiate the study of low-overhead fault complexes by introducing \emph{spacetime lifting}, a method that constructs fault complexes from symmetry-reduced product structures beyond standard foliation. We show that spacetime lifting yields fault complexes and in particular  memory experiments  with almost-linear fault distance in the total spacetime cost, which substantially outperforms existing constructions. We further endow fault complexes with the operational interpretation of measurement-based cluster state protocols and identify general conditions under which they realize fault-tolerant logical teleportation, showing that spacetime-lifted constructions combine favorable parameter scaling with operational realizations. Our work establishes a systematic route towards more efficient quantum fault tolerance through general complex constructions.

\end{abstract}
\maketitle
\end{CJK*}

\section{Introduction}

Scalable fault-tolerant quantum computation relies on quantum error-correcting codes, yet good static code properties such as high code rates and distances are not enough: one must design robust protocols for dynamically protecting and manipulating encoded quantum information in spacetime~\cite{gottesman2013fault}.
With recent experimental progress toward large-scale quantum error correction~\cite{google2025quantum,bluvstein2026fault}, the study of fault-tolerant protocols that explicitly account for both spatial and temporal overheads has only recently begun to emerge~\cite{yamasaki2024time,xu2024constant,homo_measurement,xu2025batched,zhou2025low,tan2025single,zheng2025high,he2025extractors,baspin2025fast,tamiya2026fault,williamson2026low,cowtan2026parallel,chang2026constant}.

Recent work introduced the \emph{fault complex} as an algebraic formalism for describing spacetime fault-tolerant processes~\cite{vasmer2025fault}, with applications to memory experiments, logical teleportation, syndrome extraction, GHZ-state preparation, foliated fault-tolerant schemes, and lattice surgery~\cite{RBH,raussendorf2006fault,raussendorf2007topological,nickerson2018measurement,newman2020generating,ftcomplex,vasmer2025fault}. This perspective provides a concise and unified language for analyzing dynamical fault-tolerant protocols by treating the entire spacetime process as a single geometric and algebraic object. In this spacetime picture, the central parameter that captures the actual robustness of the entire fault-tolerant process is the \emph{fault distance}, which is the minimum number of nontrivial undetectable spacetime faults. 
 This conceptual advance from the static code distance is crucial because faults can propagate through gates and measurements into correlated errors whose effect depends on where and when they occur, reflecting dynamical features beyond the distance of the underlying static codes.

Most existing approaches to fault-tolerant protocols start from explicit circuit constructions, including repeated syndrome extraction (foliation) \cite{RBH,foliation,roberts2020foliation,vasmer2025fault}, generalizations based on 3D cellular complexes \cite{nickerson2018measurement,newman2020generating,ftcomplex}, and spacetime circuit constructions \cite{bacon2017sparse,gottesman2022opportunities,delfosse2023spacetime,pesah2025fault,li2025spacetime} and their fault-tolerant ZX transformations \cite{ZXFT1,ZXFT2}. In this work, we adopt a broader and more fundamental perspective: rather than deriving fault complexes from circuits, we construct them directly from general chain complexes with desired properties. We show each fault complex naturally gives a fault-tolerant protocol in a bipartite cluster state, which can also be viewed a quantum process in a twisted spacetime. 
Specifically, we introduce a framework for constructing fault complexes based on the lifted product~\cite{panteleev2021quantum} which we call spacetime lifting.
 We show that this approach yields fault-tolerant protocols whose fault distances scale almost linearly with the total spacetime cost defined as the spacetime volume required to implement the protocol, which is nearly optimal and breaks the square-root fault distance scaling barrier of known foliated constructions \cite{foliation,vasmer2025fault}.
 {Furthermore, we develop the operational interpretation of fault complexes in the measurement-based setting and identify the conditions under which a fault complex realizes logical teleportation. Notably, the product structure endows these constructions with clear operational utility in terms of fault-tolerant MBQC and logical teleportation, naturally identifying input and output portals together with the logical correlations connecting them. } In doing so, we provide a general algebraic characterization of the conditions required for logical teleportation, extending the framework beyond foliated fault complexes to general chain complexes.

\section{Fault complexes from code complexes}

The notion of fault complex was introduced in the study of the foliation of CSS codes~\cite{foliation,vasmer2025fault}. More specifically, the fault complex $\mathcal{F}$ is the product complex of a 3-term chain complex $\mathcal{Q}$ corresponding to CSS code and a length-$\ell$ repetition-code complex $\mathcal{R}_\ell$:
\begin{eqs}
     \mathcal{F}=\mathcal{Q} \times \mathcal{R}_\ell.
\end{eqs}
Since $\mathcal{F}$ is constructed as the tensor product of two complexes, its fault distance is given by $\min \{\ell,d\}$ where $d$ is the distance of $\mathcal{Q}$. This agrees with the standard intuition that a distance-$d$ code requires at least $d$ rounds of syndrome extraction to achieve fault-tolerant error correction. Otherwise, an undetectable fault of weight $\ell$, extending purely along the temporal direction of $\mathcal{R}_\ell$, will reduce the fault distance below the code distance.

More generally, the fault complex is defined by a 4-term chain complex $\mathcal{F}$: 
\begin{equation}
\label{eq:fault_complex}
\begin{tikzcd}[row sep=0.7em, column sep=0.7em]
F_3 \arrow[r, "\partial_3"] &
F_2 \arrow[r, "\partial_2"] &
F_1 \arrow[r, "\partial_1"] &
F_0\\
{\scriptstyle \text{dual detector}} &
{\scriptstyle \text{dual fault position}} &
{\scriptstyle \text{primal fault position}} &
{\scriptstyle \text{primal syndrome}}
\end{tikzcd}
\end{equation}
Here, the outer boundary maps define the primal and dual detector matrices $D_{\text{primal}}=\partial_1, D_{\text{dual}}^\mathsf T=\partial_3$. The intermediate boundary map $\partial_2$ 
specifies the connectivity between primal and dual fault positions and captures the equivalence relations between faults that produce the same detector syndromes.
Since $\mathcal{F}$ is a chain complex, the boundary operators satisfy $\partial_i \partial_{i+1}=0$. Concretely, \(\partial_1\partial_2=0\) ensures that primal fault patterns generated
by \(\partial_2\) have trivial primal syndrome, while
\(\partial_2\partial_3=0\) gives the corresponding consistency
condition on the dual side.

The associated fault distances are given by  
\begin{eqs}\label{eq:systolic_distance}
    d_{\text{primal}}&=d_1(\mathcal{F}_\bullet )=\min \{ |\mathbf{x}|:\mathbf{x}\in \mathrm{ker} (\partial_1) / \mathrm{im}(\partial_2) \},\\
    \quad d_{\text{dual}}&=d^2(\mathcal{F}^\bullet )= \min \{ |\mathbf{x}|:\mathbf{x}\in \mathrm{ker} (\partial_3^\mathsf T) / \mathrm{im}(\partial_2^\mathsf T) \} .
\end{eqs} 
These distances are the minimum weights of undetectable but homologically
nontrivial fault configurations, and therefore quantify the smallest number
of faults that can induce a nontrivial logical fault. In this framework, the primal and dual detectors correspond to the $X$- and $Z$-type parity checks, respectively. 
With the convention used here, primal fault positions track $Z$-type data errors
together with measurement errors in $X$-syndrome extraction, while dual fault
positions track $X$-type data errors together with measurement errors in
$Z$-syndrome extraction. Equivalently, the primal and dual detector matrices
give the parity-check constraints on the corresponding syndrome histories.
In the rest of the paper,  $\mathcal{F} $ denotes a fault complex.

A 4-term chain complex can also define valid CSS codes, with the correspondence between the code complex and fault complex interpretations summarized in Table~\ref{table:code_complex_fault_complex}. Unlike a code complex where the spaces $C_2$ and $C_1$ correspond to $Z$-checks and qubits, respectively, in a fault complex, the spaces $F_2$ and $F_1$ both represent qubits, namely dual and primal qubits. In circuit-based quantum error correction, the separation into primal and dual fault sectors reflects the decoupling of $X$- and $Z$-syndrome extraction in CSS codes: primal ($Z$-type) faults affect only $X$-syndromes, while dual ($X$-type) faults affect only $Z$-syndromes. In the measurement-based setting, this separation instead originates from the bipartite structure of the underlying cluster state. As a result, decoding can be carried out independently in the two sectors, a structure that is made explicit in the fault complex formalism.

A key distinction between code complexes and fault complexes lies in how their distance notions are encoded in homological data. The code distance is computed from (co)homology at the same location in the code complex, reflecting undetectable data errors. Specifically, the $Z$- and $X$-distances are the 1-systolic and 1-cosystolic distances corresponding to nontrivial cycles in the first homology and cohomology groups
$\frac{\mathrm{ker}(\partial_1)}{\mathrm{im}(\partial_2)}$, $\frac{\mathrm{ker}(\partial_2^\mathsf T)}{\mathrm{im}(\partial_1^\mathsf T)}$, respectively. In contrast, the fault distance arises from (co)homology groups at different (adjacent) levels of the fault complex, as seen from~\eqref{eq:systolic_distance}. A classic illustration is provided by the correspondence between the foliated 2D toric code and 3D toric code. Consider a $d\times d$ toric code tensor product with a length-$d$ repetition code. When the resulting chain complex is interpreted as the code complex corresponding to the standard 3D toric code, the $Z$- and $X$-code distances are
$d_Z=d,~d_X=d^2$. By contrast, when it is interpreted as a fault complex, the primal and dual fault distances are both $d_{\text{primal}}=d_{\text{dual}}=d$.
Hence, the required spacetime cost for the foliated surface code to achieve fault distance $d$ is $O(d^3)$. For an $N$-qubit stabilizer code with linear code distance $d\sim \Theta(N)$, we generally require a $d$-round syndrome extraction to achieve fault distance $d$. The resulting spacetime cost therefore scales as $O(Nd) \sim O(d^2)$. A more detailed explanation of this correspondence is in Appendix \ref{appendix:2Dto3D}.

\textcolor{black}{In this work, we show that every 4-term chain complex naturally defines a valid fault-tolerant protocol. }
Mathematically, \textcolor{black}{a large class of} 4-term chain complexes can be constructed from the product of a 3-term and a 2-term complexes. 
In the fault complex setting, such \textcolor{black}{product} constructions admit a natural interpretation as a \textcolor{black}{spacetime process}: the quantum code defined by the 3-term complex \textcolor{black}{propagates} along a temporal direction specified by the 2-term complex. \textcolor{black}{In contrast, a general 4-term chain complex need not possess an underlying product structure and may instead describe a more general quantum-information process on a graph, without a distinguished notion of temporal evolution.} 
From this viewpoint, the additional dimension encodes the spacetime evolution of the fault-tolerant process, while the structure of the 2-term complex enforces consistency between layers.
This perspective opens a pathway for importing rich techniques and analysis for quantum code construction into the design of fault complexes. We present a new concrete method for constructing low-overhead  fault complexes based on the lifted product method in Sec.~\ref{sec:spacetime_lifting}. \textcolor{black}{To our knowledge, this is the first family of fault complexes that simultaneously admits a clear operational interpretation and achieves favorable asymptotic fault-tolerance parameters.}

\begin{table*}[t]
\centering
\renewcommand{\arraystretch}{1.25}
 \begin{tabular}{c|c|c}
\hline
 & \textbf{Code complex} & \textbf{Fault complex} \\
\hline
Degree-3 component & $Z$-metacheck $C_3$ & Dual detector $F_3$\\
Degree-2 component & $Z$-check $C_2$ & Dual qubit $F_2$\\
Degree-1 component & qubit $C_1$ & Primal qubit $F_1$\\
Degree-0 component & $X$-syndrome $C_0$ & Primal syndrome $F_0$ \\
\hline 
$\partial_3$ & $M_Z^\mathsf T$& $D_\text{dual}^\mathsf T$ \\
$\partial_2$ & $H_Z^\mathsf T$&  Adjacency matrix connecting primal and dual qubit locations \\
$\partial_1$ & $H_X$&  $D_{\text{primal}}$  \\
\hline
Errors
& Static errors acting on data qubits $C_1$
& Spacetime fault processes acting on $F_2, F_1$ \\
\hline 
  Nontrivial undetectable errors
&
$\begin{aligned}
\mathcal{L}_Z &= 
\frac{\ker (\partial_1)}{\operatorname{im} (\partial_2)} \\
\mathcal{L}_X &= 
\frac{\ker (\partial_2^\mathsf T)}{\operatorname{im} (\partial_1^\mathsf T)}
\end{aligned}$
&
$\begin{aligned}
\mathcal{L}_{\text{primal}}& =

\frac{\ker (\partial_1)}{\operatorname{im} (\partial_2)}
\\
\mathcal{L}_{\text{dual}} &= 
\frac{\ker (\partial_3^\mathsf T)}{\operatorname{im} (\partial_2^\mathsf T)}
\end{aligned}$
\\
\hline
Spacetime meaning
& Static code
& Spacetime fault-tolerant protocol \\
\hline
Operational meaning
& Protection of logical quantum information
& \makecell{Protection of spacetime fault-tolerant processes \\ (e.g. teleportation, lattice surgery...)} \\

\hline
\end{tabular}
\caption{Comparison between the code complex and fault complex interpretations of a 4-term chain complex. The former characterizes protection against undetectable data errors, while the latter quantifies the robustness of fault-tolerant protocols against spacetime fault processes.}
\label{table:code_complex_fault_complex}
\end{table*}

\section{Measurement-based protocol from fault complex}

Fault complexes provide an algebraic description of spacetime fault-tolerant processes, but for general complexes their operational realization is not immediate. In this section, we show how every fault complex naturally gives rise to a measurement-based fault-tolerant protocol. 
In various previous works~\cite{newman2020generating,nickerson2018measurement,ftcomplex} the correspondence between 4-term cellular complexes and topological cluster states has been explored. Here, we generalize this connection by demonstrating how to construct a cluster state from an arbitrary fault complex, together with its operational interpretation.

Given a 4-term fault complex $\mathcal{F}$, we can construct a cluster state on a bipartite graph, where qubits are divided into primal and dual classes. Primal qubits correspond to elements of the vector space $F_1$, while dual qubits correspond to elements of $F_2$. The boundary map $\partial_2$ specifies the connectivity between primal and dual qubits. Note that edges in the graph only connect primal and dual qubits, with no primal-primal or dual-dual connections.

The cluster state stabilizer group is generated by
\begin{eqs}\label{eq:cluster_stabilizer}
\langle X_a Z_{\partial_2 a}, X_b  Z_{\partial_2^\mathsf T b} \rangle, \quad \forall a \in F_2, b \in F_1,
\end{eqs}
where $a$ and $b$ are binary vectors labeling dual and primal qubits, respectively. Here, the Pauli operators with subscripts $a,b,\partial_2 a, \partial_2^\mathsf T b$ denote products of $X$ and $Z$ operators supported on the qubits specified by the subscript. The $\partial_2 a$ denotes the collection of neighboring primal qubits around dual qubits $a$, while $\partial_2^\mathsf T b$ denotes the collection of neighboring dual qubits around primal qubits $b$.

Within the stabilizer group, certain operators are purely $X$-type. These pure $X$ operators include both primal/dual detectors as well as logical correlations. The primal and dual detectors are $\partial_1, \partial_3^\mathsf T$ respectively, and the chain complex conditions $\partial_1 \partial_2 = 0$ and $\partial_2 \partial_3 = 0$ ensure that each detector is a pure $X$ operator. 

Operationally, the fault complex describes a process as follows. One first prepares the cluster state defined by $F_2, F_1, \partial_2$. Then, all qubits are measured in the single-qubit $X$ basis. These destructive measurements eliminate all stabilizers that do not commute with the measurements, leaving only the primal and dual detectors as well as logical correlations. We emphasize that, in the MBQC setting, each detector corresponds to a constraint requiring a particular product of single-qubit $X$-measurement outcomes to be $+1$, rather than arising from non-destructive syndrome extraction as in the circuit model. In this setting, $Z$ errors that flip measurement outcomes are detected by these detectors. In contrast, $X$ errors commute with the $X$-basis measurements and thus do not produce nontrivial action on the measurement outcomes.

However, this description assumes that the entire system is measured, so that only classical correlations remain. For a genuinely quantum fault-tolerant protocol, quantum correlations must be preserved. Therefore, in practice, only a subset of qubits in the fault complex is measured.  This will be discussed later.

\section{Spacetime lifting}\label{sec:spacetime_lifting}

In Ref.~\cite{panteleev2021quantum}, Panteleev and Kalachev introduced the lifted product and obtained a family of qLDPC codes on \(N\) physical qubits that achieve parameters $k=\Theta(\log N)$ and $d=\Theta \left(\frac{N}{\log N}\right)$. They can be viewed as tensor products of classical parity-check matrices over the group algebra $R=\mathbb{F}_2[\mathbb{Z}_\ell]\cong \mathbb F_2[x]/(x^\ell-1)$, where each entry is a polynomial with binary coefficients subject to the relation $x^\ell = 1$, rather than a single binary value. For analytical convenience, we use this polynomial representation. In practice, each polynomial can be expanded into an $\ell \times \ell$ circulant binary matrix.

Given matrices $A\in R^{m_A\times n_A}$ and $B\in R^{m_B\times n_B}$,  the quasi-cyclic lifted product code $ \text{LP}(A,B)$ is defined as
\begin{eqs}\label{eq:LP_code_parity_check}
    \text{LP}(A,B): ~~ &H_X=\begin{pmatrix}
        A \otimes_{R} \mathbb{I}_{m_B}| \mathbb{I}_{m_A} \otimes_R B
    \end{pmatrix},\\
    &H_Z=\begin{pmatrix}
        \mathbb{I}_{n_A} \otimes_{R} B^\star| A^\star \otimes_R \mathbb{I}_{n_B}
    \end{pmatrix},
\end{eqs}
where ${}^\star$ denotes conjugate transpose over $R$, namely matrix
transpose together with the involution $x\mapsto x^{-1}$. When represented by binary form, the resulting code acts on $\ell (n_A m_B + m_A n_B)$ qubits, rather than the $\ell^2 (n_A m_B + m_A n_B)$ qubits that would arise from a tensor product. It is easy to check that the CSS condition is satisfied. Further background on lifted product codes is provided in Appendix~\ref {appendix:LP_properties}.

It is shown in Ref.~\cite{panteleev2021quantum} that $\text{LP}(A,1+x)$ can yield qLDPC codes with almost linear distance $d=\Theta\left(\frac{N}{\log N}\right)$ for a large lift size $\ell$. Here, $A$ is a $\mathbb{Z}_\ell$ quasi-cyclic classical expander code, while $1+x$ corresponds to a length-$\ell$ repetition code with periodic boundary conditions (denoted as $\mathcal{R}_\ell$). Conceptually, this construction can be viewed as a product of a classical expander complex and a repetition code complex, both carrying a $\mathbb{Z}_\ell$ cyclic symmetry, with the product taken over the common group algebra so as to quotient out a redundant cyclic
symmetry\footnote{The standard tensor product introduces two independent copies of the underlying cyclic symmetry, resulting in substantial redundancy and overhead in the product code. Consequently, the distance of an $N$-qubit tensor-product code is generally limited to at most $\Theta(\sqrt{N})$.} which can be written as a lifted product of chain complexes $\mathcal{A} \times_{\mathbb{Z}_\ell } \mathcal{R}_\ell$. This “twisted product” idea underlies much of the
modern progress on qLDPC constructions including lifted and balanced products \cite{panteleev2021degenerate,panteleev2021quantum,breuckmann2021balanced,panteleev2022asymptotically} and fiber bundle codes \cite{hastings2021fiber}, enabling major
improvements in qLDPC parameters leading up to the construction of asymptotically good qLDPC codes by taking products over group
rings.

\begin{figure*}[t]
    \centering
    \includegraphics[width=0.7\linewidth]{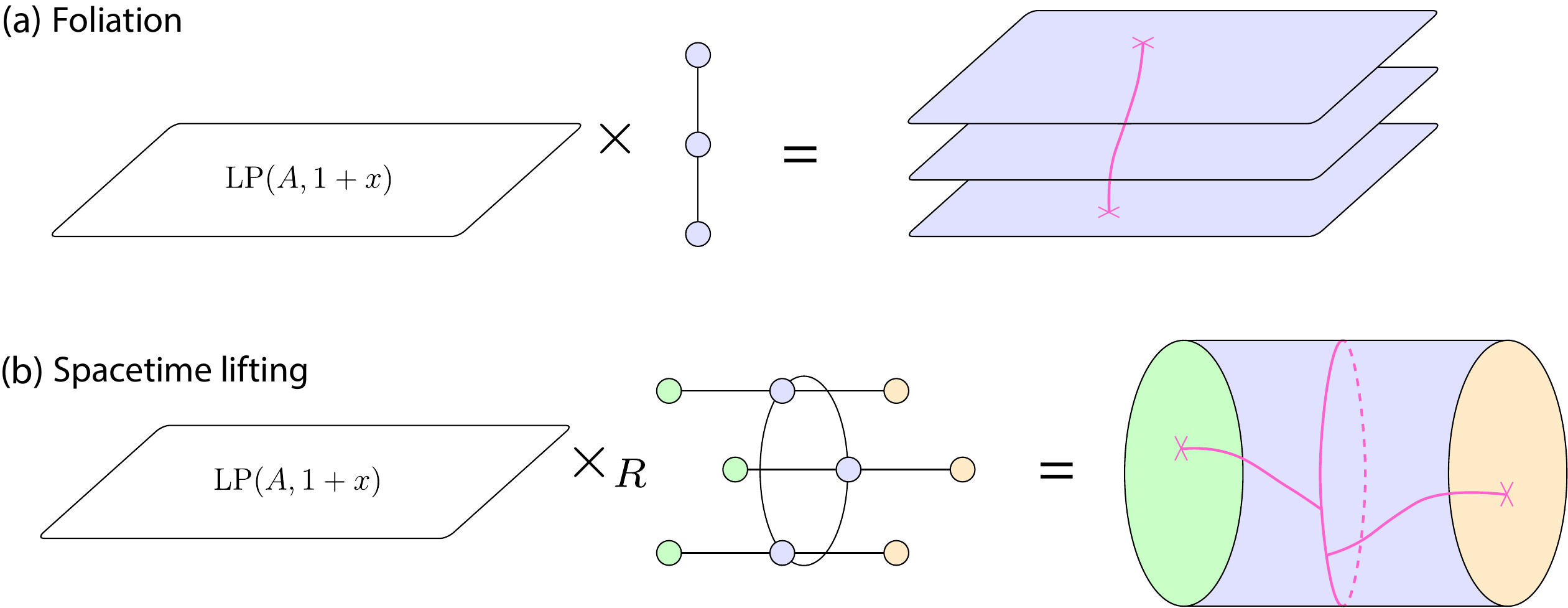}
    \caption{Comparison between (a)~foliation and (b)~spacetime lifting. In foliation, we take tensor product between a good lifted product code with a repetition code. The lifted product code has linear distance $d$, to achieve fault distance $d$, we have to let the repetition code to have at least length-$d$, so the total spacetime cost is \zw{$\widetilde O(d^2)$}. However, in spacetime lifting, we take a lifted product (tensor product quotiented out a $\mathbb{Z}_\ell$ cyclic symmetry) between a distance-$d$ lifted product code and a thickened length-$d$ repetition code with portal, to achieve fault distance being at least $d$, we only need \zw{$\widetilde O(d)$} spacetime cost instead of \zw{$\widetilde O(d^2)$}, which can also be interpreted as fault-tolerant syndrome extraction with $O(1)$ time overhead in a spacetime-lifted memory experiment. }
    \label{fig:comparison}
\end{figure*}

\subsection{Fault complexes from lifted product}

Building on the above idea, we extend beyond foliated constructions and construct fault complexes from lifted product complexes—a procedure we refer to as \emph{spacetime lifting}. Concretely, we construct 4-term complexes by iterating the lifted product over multiple codes: given three classical codes $A,B,C$, we define  $\text{LP}(A,B,C) \equiv \text{LP}(\text{LP}(A,B),C)$. 
This can be understood as a generalized foliation of the lifted product code $\text{LP}(A,B)$ with a temporal code $C$. Algebraically, it forms a 4-term chain complex through a lifted product over the group algebra, analogous to the usual tensor-product construction~\cite{armanda2021singleshot}. The three steps of chain maps are explicitly constructed from each classical code as  \begin{widetext}
\begin{eqs}\label{eq:LP_parity_check}
    \partial_1&= \left( \setlength{\arraycolsep}{6pt}\begin{array}{c c c}
        A \otimes_R \mathbb{I}_{m_B}\otimes_R \mathbb{I}_{m_C} & \mathbb{I}_{m_A}\otimes_R B \otimes_R \mathbb{I}_{m_C} & \mathbb{I}_{m_A} \otimes_R \mathbb{I}_{m_B} \otimes_R C
    \end{array} \right),\\
    \partial_2&=\left( \setlength{\arraycolsep}{6pt }\begin{array}{ccc}
\mathbb{I}_{n_A} \otimes_R B \otimes_R \mathbb{I}_{m_C} & \mathbb{I}_{n_A} \otimes_R  \mathbb{I}_{m_B} \otimes_R C & 0 \\
  A \otimes_R \mathbb{I}_{n_B} \otimes_R \mathbb{I}_{m_C}& 0 & \mathbb{I}_{m_A} \otimes_R  \mathbb{I}_{n_B} \otimes_R C \\
0 &   A \otimes_R\mathbb{I}_{m_B} \otimes_R \mathbb{I}_{n_C}&\mathbb{I}_{m_A} \otimes_R B \otimes_R \mathbb{I}_{n_C}
\end{array} \right),\\
\partial_3&= \left(  \setlength{\arraycolsep}{6pt} \begin{array}{c}
        \mathbb{I}_{n_A} \otimes_R \mathbb{I}_{n_B} \otimes_R C\\ 
        \mathbb{I}_{n_A} \otimes_R B \otimes_R \mathbb{I}_{n_C}\\
        A \otimes_R \mathbb{I}_{n_B} \otimes_R \mathbb{I}_{n_C}
    \end{array} \right),
\end{eqs}
where $A,B,C$ are $m_A\times n_A, m_B\times n_B, m_C\times n_C$ check matrices over $R$ for the classical codes specified by $\mathcal{A}, \mathcal{B}, \mathcal{C}$, respectively.

\end{widetext}


\textcolor{black}{As a simple example, we can construct a detector-completed spacetime-lifted fault complex $\widehat{\mathrm{LP}}(A,1+x,1+x)$ with almost-linear fault distance scaling $d=\Omega\big(N/\log N\big)$ from the lifted-product complex $\mathrm{LP}(A,1+x,1+x)$, which can be expressed as
\begin{eqs}
\mathcal{A} \times_{\mathbb{Z}_\ell} \mathcal{R}_\ell \times_{\mathbb{Z}_\ell} \mathcal{R}_\ell,
\end{eqs}
which naturally forms a 4-term chain complex of the form of \eqref{eq:fault_complex}.}

\textcolor{black}{The detector-completed spacetime-lifted fault complex $\widehat{\mathrm{LP}}(A,1+x,1+x)$ retains the same $\partial_1$ and $\partial_2$ as $\mathrm{LP}(A,1+x,1+x)$, but replaces $\partial_3$ with a detector-completed boundary map $\widetilde{\partial}_3$ that eliminates the additional short cocycles introduced during spacetime lifting.}


When the 4-term chain complex \textcolor{black}{$\widehat{\text{LP}}(A,1+x,1+x)$} is interpreted as a quantum code, the spaces $C_3, C_2, C_1, C_0$ correspond to $Z$-metachecks, $Z$-checks, $Z$-errors, and $X$-syndromes, respectively. The code distances are determined by the minimal nontrivial elements in the (co)homology groups $\ker(\partial_1)/\mathrm{im}(\partial_2)$ and $\ker(\partial_2^\mathsf T)/\mathrm{im}(\partial_1^\mathsf T)$.
Alternatively, when it is interpreted as a spacetime fault complex, the first and third boundary maps define the primal and dual detectors. In this case, the fault distances are governed by the (co)homology groups $\ker(\partial_1)/\mathrm{im}(\partial_2)$ and $\ker(\tilde{\partial}_3^\mathsf T)/\mathrm{im}(\partial_2^\mathsf T)$. The total spacetime cost of the fault complex is given by the total number of primal and dual qubits, namely $\dim(F_1)+\dim(F_2)$, which represents the total number of qubit locations in the spacetime process.
Informally, this corresponds to the spacetime volume---$\text{time} \times \text{space}$---in the circuit model. While implementation-specific optimizations such as qubit recycling may reduce the number of simultaneously active qubits, they do not change the overall spacetime-volume scaling.
Therefore, the spacetime cost provides a
model-independent measure of the overhead, which characterizes the required amount of physical resources. In this way, a code complex
with large \(Z\)-distance and large single-shot \(Z\)-distance naturally gives rise to a spacetime fault complex with large primal and dual fault distances.

To make the performance explicit, let
$A\in\mathbb{F}_2[\mathbb{Z}_\ell]^{m\times n\Delta}$ be
$\Delta$-limited, with $A$ and $A^\mathsf T$ both
$(\alpha,\beta)$-expanding and $A(1)$ of full row rank, where
\textcolor{black}{$\gamma=\frac{\alpha}{4\Delta}\min\{\beta,1\}$,
$n=\Theta(\log\ell)$}, and $m\leq n\Delta/2$.  Under this choice, the resulting fault-tolerant protocol on $O(\ell n\Delta )$ total physical qubits
 achieves an almost linear distance scaling $d\geq \gamma \ell = \Omega (N/\log N)$. This follows from the fact that the numbers of primal and dual qubits in the fault complex differ from those of the underlying lifted product code only by constant factors. 
 \zw{The distance bound holds for the detector-completed complex.
The completion changes only the dual detector map, leaving
$\partial_2$ and the number of cluster state qubits unchanged.}
 See Appendix~\ref{appendix:distance_proof} for the full proof with further details.

\subsection{Spacetime-lifted memory experiments with almost linear fault distance}

We now present a fault complex for the spacetime-lifted memory experiment of the lifted product code $\mathrm{LP}(A,1+x)$, which achieves almost-linear fault distance. In contrast, implementing the conventional memory experiment on lifted product code $\mathrm{LP}(A,1+x)$ using a standard foliated fault complex achieves at best square-root fault distance scaling with the total spacetime cost.

To model the spacetime-lifted memory experiment, it is necessary to specify the initial and final boundary conditions. To this end, we consider the \textcolor{black}{detector-completed} spacetime-lifted fault complex $\widehat{\text{LP}}(A,1+x,C)$, where $C$ is a “thickened” repetition code (see Fig.~\ref{fig:comparison}(b)) given by
\begin{eqs}
C=\begin{pmatrix}
        1+x & 0& 0\\
        1& 1 & 0\\
        1& 0& 1
    \end{pmatrix}.
    \end{eqs} 
    This code, illustrated in Fig.~\ref{fig:repetition_portal}(b), has a trilayer structure that encodes the start-middle-end causal structure of the spacetime-lifted memory experiment. The resulting construction can be interpreted as a spacetime-lifted realization of a three-round measurement protocol with nontrivial dual correlation. Since all the qubits are destructively measured in such process, this protocol corresponds to a spacetime-lifted memory experiment.
    
    Using the expansion properties of $A$ and $A^\mathsf T$, we verify  that the resulting fault-tolerant protocol has total spacetime cost $N=O(\ell n \Delta)$ and the fault distance of this memory experiment satisfies $d \geq \gamma \ell = \Omega\left(\frac{N}{\log N}\right)$. The full derivation is detailed in Appendix~\ref{appendix:memory_distance}. For comparison, for the same underlying quantum code $\text{LP}(A,1+x)$, a conventional three-round foliated construction yields a fault complex with distance only $3$.

This example highlights the fundamental advantage of spacetime lifting over standard foliation: by leveraging the symmetries of the underlying code into the design of the fault-tolerant protocol, one can improve the fault distance scaling without increasing the asymptotic overhead.

\section{Fault complexes for logical teleportation}

As discussed above, a fault complex naturally describes a fault-tolerant process in which all qubits are ultimately measured, leaving only classical correlations. This fully measured process may be viewed as the bulk component of a fault-tolerant protocol. To process and transmit quantum information, however, a subset of qubits must remain unmeasured and serve as input and output portals of a fault-tolerant quantum channel. In this section, we study the conditions under which a fault complex supports logical teleportation, one of the central primitives of fault-tolerant quantum computation.
Previous works \cite{newman2020generating, nickerson2018measurement, ftcomplex} studied measurement-based logical teleportation in the setting of 3D cellular complexes and topological codes, where the input/output portals and correlations between them arise naturally from geometry. Operationally, these protocols can be interpreted as fault-tolerant measurement-based preparation of a logical Bell pair shared between input and output. By the Choi–Jamiołkowski isomorphism, such state serves as a stabilizer resource state implementing the logical identity channel \cite{bombin2023logical, beverland2024fault}. Consequently, consuming this resource state together with fault-tolerant state injection at the input portal—implemented, for example, via transversal Bell measurements between a noisy encoded state and the input portal, realizes fault-tolerant logical teleportation. However, a general framework for constructing fault complexes and fault-tolerant protocols that enable logical teleportation remains lacking. In this section, we address this gap by formulating a class of measurement-based fault-tolerant protocols for logical teleportation of CSS codes from general fault complexes.

Our goal is to have two portals consisting of two code blocks entangled by the logical correlations $\overline{X_{\text{in}} X_{\text{out}}}$ and $\overline{Z_{\text{in}} Z_{\text{out}}}$ \footnote{Generally, the input and output portals encode $k$ logical qubits. For notational simplicity, we write $\overline{X_{\text{in}} X_{\text{out}}}$, $\overline{Z_{\text{in}} Z_{\text{out}}}$ to denote the complete collection of logical Bell correlations for all $k$ logical Bell pairs.}. Equivalently, the portals should be stabilized by the logical Bell-state stabilizer group $\langle \mathcal{S}_\text{in}, \mathcal{S}_\text{out}, \overline{X_{\text{in}} X_{\text{out}}}, \overline{Z_{\text{in}} Z_{\text{out}}} \rangle$\footnote{For simplicity, we suppress the $\pm 1$ phases in the stabilizer generators which depend on the measurement outcomes, as the $\pm 1 $ signs can be removed by updating Pauli frame in post-processing.}. {To realize this, we consider a general fault complex $\mathcal{F}$ with two designated subsets of primal qubits, $Q_{\mathrm{in}},Q_{\mathrm{out}}\subset F_1$, called portals, which serve as the input and output code blocks.} We prepare the cluster state defined by $\mathcal{F}$ and perform single-qubit $X$ measurements on all qubits except those in $Q_{\text{in}}$ and $Q_{\text{out}}$. The resulting state on $Q_{\text{in}}$ and $Q_{\text{out}}$ lies in a logical Bell basis, with the specific eigenvalues determined by the bulk measurement outcomes.

A fault complex for logical teleportation should contain the following ingredients: 
\begin{enumerate}
    \item  \textbf{Fault tolerance}: Extensive masked primal fault distance and unmasked dual fault distance.
    \item \textbf{Logical encoding}: The input and output portals are encoded into the logical spaces of the stabilizer groups $\mathcal{S}_{\text{in}}$ and $\mathcal{S}_{\text{out}}$, respectively, where the two stabilizer groups are equivalent up to exchanging $Q_{\text{in}}$ and $Q_{\text{out}}$.
    \item  \textbf{Logical correlation}: There exist logical correlations $\overline{X_{\text{in}} X_{\text{out}}}$ and $\overline{Z_{\text{in}} Z_{\text{out}}}$.
\end{enumerate}

 For the fault tolerance condition, after measuring out qubits on $F_1 \backslash (Q_{\text{in}} \bigcup Q_{\text{out}})$ and $F_2$, we have access to all the dual detectors $\partial_3^\mathsf T=D_{\text{dual}}$ and masked primal detectors $\partial_{1,\text{mask}}=D_{\text{primal} \backslash Q_{\text{in}}, Q_{\text{out}}}:= \{ D \in D_{\text{primal}}: \text{supp}(D) \bigcap (Q_{\text{in}}\bigcup Q_{\text{out}}) =\emptyset \}$ since qubits in $Q_{\text{in}}$ and $Q_{\text{out}}$ are not measured, and therefore any detector with support on these regions is inaccessible. 
 The qubits on blocks $Q_{\text{in}}, Q_{\text{out}}$ lie in the eigenspace of some stabilizer group, as well as in the eigenspaces of the logical correlations. 
 The eigenvalues of logical correlations are inferred from the bulk measurement outcomes. Because these outcomes are noisy, error correction must be performed using the available detector information from $D_{\text{dual}}$ in $F_2$ and $D_{\text{primal} \backslash (Q_{\text{in}} \bigcup Q_{\text{out}})}$ in $F_1 \backslash (Q_{\text{in}} \bigcup Q_{\text{out}})$. Importantly, this correction is purely classical: it consists of inferring a consistent measurement outcome pattern without any quantum feedback. Hence the primal fault is detected by masked primal detectors $\partial_{1, \text{mask}}$ to correct primal faults on a reduced set of qubits $F_1 \backslash (Q_{\text{in}} \bigcup Q_{\text{out}})$. Accordingly, the primal fault distance is defined as $d_{\text{primal}} = \min \{ |\mathbf{x}|: \mathbf{x} \in \mathrm{ker}(\partial_{1, \text{mask}})/ \mathrm{im} (\partial_2)\}$, as introduced in the context of spacetime codes \cite{fu2025error}. The dual fault distance $d_{\text{dual}}$ remains unchanged from its definition in~\eqref{eq:systolic_distance}.

The logical encoding condition highlights the importance of the product structure: fault complexes of product type provide natural input/output portals for which the condition is automatically satisfied. Suppose the base quantum code given by $\mathcal{A} \times_R \mathcal{B}$ has check matrices $H_X, H_Z$, which gives stabilizers $\mathcal{S}=\langle S_{X,1},\ldots, S_{X,r_X}, S_{Z,1},\ldots, S_{Z,r_Z}\rangle  $ for a lifted product code specified by $\mathcal{A} \times_{R} \mathcal{B}$. The product structure naturally introduces multiple copies of the base code supported on different slices associated with the temporal complex $\mathcal{C}$. In particular, the first row of blocks of $\partial_2$ in \eqref{eq:LP_parity_check} contains $m_C$ copies of the $H_Z$ checks acting on different slices. Choosing two such slices as the portals $Q_{\text{in}},Q_{\text{out}}$ therefore guarantees that the two portals support the same encoded stabilizer code. More precisely, for every $i=1,\ldots,r_Z$, there exist dual qubits $b_i,b_i'$ satisfying $\partial_2 b_i=\text{supp}(S_{Z,i, \text{in}}), \partial_2 b_i'=\text{supp}(S_{Z,i,\text{out}})$, so that the associated cluster state stabilizers take the form $X_{b_i} Z_{\partial_2 b_i}=X_b S_{Z,i,\text{in}}$, $X_{b_i'} Z_{\partial_2 b_i'}=X_b S_{Z,i,\text{out}}$. Hence, the $Z$-stabilizers of the code on the portals are naturally embedded into the cluster state stabilizer group on each portal. Similarly, from the structure of $\partial_1$ in \eqref{eq:LP_parity_check}, there are $m_C$ copies of the $H_X$ checks coupled with temporal qubits from $\mathcal{C}$ supported on different slices. Using the same choice of portals, one obtains primal detectors of the form  $D_{X,i}= S_{X,i,\text{in}} X_{\text{bulk}}\equiv 1, D_{X,i}'= S_{X,i,\text{out}} X_{\text{bulk}}\equiv 1$, for all $i=1,\ldots,r_X$, where $X_{\text{bulk}}$ denotes the product of bulk $X$-measurement outcomes appearing in the detector. Consequently, the eigenvalues of the $X$-stabilizers on both portals can also be inferred directly from the bulk measurement outcomes.

For the logical correlation condition, the two portals must also lie in an eigenbasis of the logical correlations $\overline{X_{\text{in}}X_{\text{out}}}$ and $\overline{Z_{\text{in}}Z_{\text{out}}}$, where $\overline{X}_{\text{in}},\overline{Z}_{\text{in}}$ denote the logical Pauli operators on the input portal, and similarly for the output portal. Since the portals are supported on primal qubits, a valid construction requires the existence of cluster state stabilizers that can be written as $\overline{X_{\text{in}} X_{\text{out}}} \cdot X_{q}$ with $q\in F_1 \backslash (Q_{\text{in}}\bigcup Q_{\text{out}})$. Such operators are characterized by the cohomology group $\ker(\partial_2^\mathsf T)/\mathrm{im}(\partial_1^\mathsf T)$. After measuring the bulk and performing classical decoding, the corrected bulk outcomes determine the eigenvalue of $\overline{X_{\text{in}} X_{\text{out}}}$.

Similarly, having logical correlation $\overline{Z_{\text{in}} Z_{\text{out}}}$ requires there exist a set of dual qubits $p \in F_2$ such that $\partial_2 p= \text{supp}(\overline{Z_{\text{in}}Z_{\text{out}}})$. The corresponding cluster state stabilizer $X_p Z_{\partial_2 p} = X_p \overline{Z_{\text{in}} Z_{\text{out}}}$ allows one to infer the eigenvalue of the logical $Z$ correlation directly from the measurement outcomes on $p$. It is important to distinguish this logical $\overline{Z_{\text{in}}}\overline{Z_{\text{out}}}$ correlation from the dual correlations in $\ker (\partial_2) / \mathrm{im}(\partial_3)$ of the fault complex. The $\overline{Z_{\text{in}}Z_{\text{out}}}$ correlation we look for should be a cluster state stabilizer which only contains $X$ in the bulk and logical $\overline{Z}$ at the two portals; however, the dual correlation of fault complex are pure $X$-type operators in the cluster state stabilizer acting solely on dual qubits. We use a 1D cluster state example to demonstrate the distinction between dual correlation and $\overline{Z_{\text{in}}Z_{\text{out}}}$ correlation in Appendix.~\ref{appendix:1D_cluster}.

This analysis gives an algebraic generalization of the Raussendorf–Bravyi–Harrington protocol \cite{RBH,raussendorf2007topological} and foliated codes \cite{foliation}, {promoting the previous teleportation picture based on foliation fault complexes} to a homological teleportation criterion for general fault complexes with
product structure. 
A key intuition worth highlighting is that the simple temporal structure in standard foliation may be rather inefficient, so our approach based on more general product complexes enables new opportunities to reduce the spacetime overhead.

\section{Conclusion and outlook}

In this work, we establish a systematic connection between fault complexes—an algebraic framework for describing spacetime fault-tolerant protocols—and static code complexes. Conceptually, this amounts to promoting a quantum code described by a 3-term chain complex into a fault-tolerant protocol described by a 4-term fault complex through the introduction of an additional spacetime dimension.   Exploiting this connection, we formalize the spacetime lifted product method, which utilizes symmetry reduction to construct fault complexes with substantially improved fault tolerance parameter behaviors compared to known methods. For example, we present a spacetime-lifted memory experiment whose fault distance scales almost linearly with the total spacetime cost.

Crucially, spacetime lifting not only achieves nearly optimal fault distance, breaking the square-root barrier of standard foliation, but also retains a natural operational interpretation in terms of fault-tolerant quantum information processing. To make this rigorous, we identify general conditions under which a fault complex realizes logical teleportation. While existing teleportation protocols arise naturally from topological or foliated product constructions, a general fault complex with good parameters may not support logical teleportation. We show that the product structure are naturally well suited for low-overhead logical teleportation protocols, providing a key operational motivation for studying them. Together, these results provide a framework for designing fault complexes that combine favorable parameter scaling with clear operational meaning.

The spacetime lifting perspective also brings several conceptual advances. First, it treats the entire spacetime process itself as the fundamental object, thereby incorporating spacetime overhead into the analysis of fault tolerance. Second, it opens a new route toward low-overhead fault-tolerant protocols drawing on techniques from the theory of qLDPC codes. In this sense, our approach is complementary to spacetime codes~\cite{bacon2017sparse,gottesman2022opportunities,delfosse2023spacetime,pesah2025fault}: while spacetime codes are constructed from fault-tolerant circuits, here we instead construct fault-tolerant protocols from higher-dimensional complexes which can define quantum codes.  Recent work has also suggested that fault tolerance plays a key role in efforts woward the quantum PCP conjecture
~\cite{anshu2024circuit,breuckmann2026adversarial}. Low-overhead fault complexes may offer a complementary framework for pursuing
this connection.

Although we focus in this work on lifted product constructions with cyclic symmetry, it is natural to expect that the framework extends further. In particular, given that lifted and balanced products with richer (e.g., non-Abelian) symmetry structures have led to major advances in qLDPC code parameters~\cite{panteleev2022asymptotically,breuckmann2021balanced}, it would be interesting to explore fault complexes and associated protocols leveraging such more general symmetry reduction. Furthermore, while this work investigates measurement-based protocols, a natural next step is to study circuit-based fault-tolerant protocols from fault complexes.

More broadly, fault complexes can also describe spacetime processes with nontrivial logical action such as lattice surgery \cite{cowtan2025fast}. From this viewpoint, logical teleportation corresponds to the case of transporting a logical identity channel, while more general fault-tolerant protocols can be interpreted as stabilizer channels implementing nontrivial logical operations where the input and output portals exhibit correlations beyond the Bell-type. Therefore, it would be interesting to explore how these ideas can be extended to construct other fault-tolerant primitives, which would serve as building blocks for low-overhead universal fault-tolerant quantum computation.

\section*{Acknowledgement }
YX thanks Yujie Zhang and Yilun Li for hosting his visits in Tokyo.
YW is supported by startup funding from SIMIS. 
ZWL is supported in part by NSFC under Grant No.~12475023, Dushi Program, and a startup funding from YMSC.

\bibliography{biblo.bib}

\onecolumngrid
\appendix

\section{3D toric code from 2D toric code foliation}\label{appendix:2Dto3D}

\begin{figure}[h]
    \centering
    \includegraphics[width=0.5\textwidth]{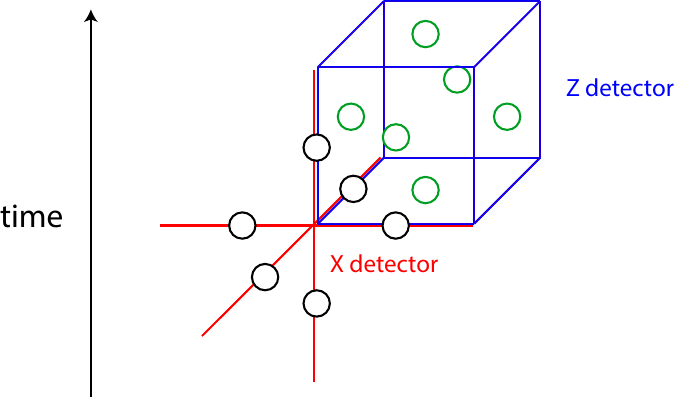}
    \caption{Foliated 2D toric code, where the black cycles represent primal faults associated with a $X$ (primal) detector, and the green cycles represent dual faults associated with a $Z$ (dual) detector. Here we consider the phenomenological error model.}
    \label{fig:foliated_TC}
\end{figure}

We illustrate this structure using the correspondence between the fault complex of the foliated 2D toric code and the 3D toric code complex. It is well known that taking the tensor product of a 2D toric code with a repetition code produces the chain complex of the 3D toric code. Equivalently, the 3D toric code can be interpreted as a stack of 2D toric code layers coupled by a repetition code along the third (temporal) direction. From this perspective, the repetition code supplies the inter-plane constraints that glue together successive 2D toric code sheets. Here, we make explicit the correspondence between the code complex of the 3D toric code and the fault complex arising from the foliated 2D toric code.

In the standard 3D toric code, physical qubits are placed on edges $C_1$. The $X$-checks are weight-6 operators supported on vertices $C_0$, while the $Z$-checks are weight-4 operators supported on faces $C_2$. The $Z$-metachecks are products of $Z$-checks surrounding closed volumes in $C_3$, thereby enabling single-shot correction of $X$ errors.

This layered viewpoint aligns naturally with the fault complex interpretation: each 2D layer corresponds to a round of syndrome extraction, while the inter-layer couplings encode measurement consistency between adjacent rounds. As a result, spacetime faults in the foliated 2D toric code are promoted to geometric objects in the 3D cellular complex.

More concretely, this chain complex can be reinterpreted as a fault complex as shown in Fig.~\ref{fig:foliated_TC}. In this picture, primal fault positions reside on edges, while dual fault positions reside on faces. The primal fault space captures $Z$-type data errors of the foliated 2D toric code together with measurement errors in $X$-check readout. Correspondingly, the dual fault space captures $X$-type data errors together with measurement errors in $Z$-check readout.

Under this correspondence, $Z$-checks of the 3D toric code are identified with dual fault positions in the foliated complex. The primal detectors are weight-6 spacetime checks that combine a weight-4 $X$-check of the 2D toric code in a time slice with measurement errors from two adjacent rounds. Similarly, the dual detectors are weight-6 checks acting on dual fault positions, corresponding to the $Z$-metachecks that combine a weight-4 $Z$-check with measurement errors across neighboring rounds.

In this sense, the 3D toric code provides a concrete illustration of this correspondence, where the complex encodes the fault-tolerant spacetime structure of the foliated 2D toric code: cube constraints enforce detector consistency across adjacent rounds, and the third dimension directly represents the temporal direction of repeated syndrome extraction.

\section{Basic properties of lifted product codes}
\label{appendix:LP_properties}

We denote the binary representation of a matrix $M \in \mathbb{F}_2[\mathbb{Z}_\ell]^{m\times n}$ by $\mathbb{B}(M)\in \mathbb{F}_2^{\ell m \times \ell n}$. For a matrix with entries in $\mathbb{F}_2[\mathbb{Z}_\ell]$, the binary representation is obtained by replacing each polynomial entry with the corresponding sum of cyclic shift matrices. In particular, the generator $x$ is represented by the $\ell \times \ell$ cyclic translation matrix whose generator, inverse, and identity are
\begin{eqs}
    \mathbb{B}(x)= \begin{pmatrix}
        0 & 1& 0& \cdots & 0\\
        0 & 0& 1& \cdots & 0\\
        \vdots & \vdots & \vdots & \ddots & \vdots\\
        1 & 0& 0& \cdots &0
    \end{pmatrix},\quad \mathbb{B}(x^{-1})= \mathbb{B}(x)^\mathsf T
\end{eqs}
and $\mathbb{B}(1)= \mathbb{B}(x) \mathbb{B}(x^{-1})=\mathbb{I}_{\ell}$. 
Given the binary representation of the polynomial ring $\mathbb{F}_2[\mathbb{Z}_\ell]$, we can immediately show that the binary representation of a $1\times 1$ matrix $1+x$ is the parity check matrix of length-$\ell$ repetition code with periodic boundary condition
\begin{eqs}
    \mathbb{B}(1+x)=\mathbb{B}(1)+\mathbb{B}(x)= \begin{pmatrix}
        1 & 1& 0& \cdots & 0& 0\\
        0 & 1& 1& \cdots & 0& 0\\
        \vdots & \vdots & \vdots &\ddots & \vdots & \vdots\\
        1 & 0& 0& \cdots & 0& 1
    \end{pmatrix}.
\end{eqs}
The Hamming weight $|\cdot |$ extends naturally to the polynomial ring $\mathbb{F}_2[\mathbb{Z}_\ell]$, by defining it as the number of nonzero monomial terms; for example, $|\mathbb{B}(x+x^2)|=|x + x^2| =2$. Hence the Hamming weight of a matrix over the polynomial ring is defined as the Hamming weight of its binary representation.

Aside from the Hamming weight of matrices, to analyze the weight of codeword vectors written in terms of vectors over polynomial ring, the binary representation $\mathbb{b}$ of a vector $v=(v_1,\ldots,v_n)\in \mathbb{F}_2[\mathbb{Z}_\ell]^n$ where every entry are written as $v_i=\sum_{j=0}^{\ell-1}v_{i,j} x^j$ is defined as 
\begin{eqs}
    \mathbb{b}(v)=\begin{pmatrix}
        v_{1,0}, v_{1,1}, \ldots, v_{n,\ell-1}
    \end{pmatrix}\in \mathbb{F}_2^{\ell n},
\end{eqs}
which is a collection of all the $\mathbb{F}_2$ coefficients in $v$.

Again, the Hamming weight of $v$ is equal to the Hamming weight of the corresponding binary representation $\mathbb{B}(v)$. The binary representation has following identity $\mathbb{b}(Mv)= \mathbb{B}(M) \mathbb{b}(v)$. 

The lifted product over $\mathbb{F}_2 [\mathbb{Z}_\ell]$ implements the identification
\begin{eqs}
     x^{h_1}\otimes_R x^{h_2+1} \equiv  x^{h_1+1}\otimes_R x^{h_2}. 
\end{eqs}
In the lifted product construction, we first take lifted product between chain complexes over $\mathbb{F}_2[\mathbb{Z}_\ell]$ and then pass to a binary representation to obtain a complex over $\mathbb{F}_2$.

\section{Proof of the main theorems}\label{appendix:distance_proof}

In this appendix, we prove that the \textcolor{black}{spacetime-lifted} fault complex $\text{LP}(A,1+x,1+x)$ achieves almost linear scaling of both 1-systolic (primal fault) and 2-cosystolic (dual fault) distances. Our proof strategy follows that of Ref.~\cite[Sec. V]{panteleev2021degenerate}, which establishes almost linear code distance for $\text{LP}(A,1+x)$ by analyzing the corresponding $1$-systolic ($X$) and $1$-cosystolic ($Z$) distances of the associated complex.

The key differences in our setting are twofold. First, we consider a complex obtained from the lifted product of three classical codes, rather than two. Second, we aim to establish almost linear scaling for the $1$-systolic and $2$-cosystolic distances, instead of the $1$-systolic and $1$-cosystolic distances considered in the analysis of lifted product complex $\text{LP}(A,1+x)$ in Ref.~\cite{panteleev2021quantum}. \textcolor{black}{Moreover, the raw iterated lift contains an inherited primal
sector and an ancillary temporal sector corresponding to
$\operatorname{coker}A(1)$. The full-row-rank condition on
$A(1)$ makes the latter trivial, which is why every nontrivial
primal class satisfies $u(1)\neq0$. The analogous ancillary
sector on the dual side need not vanish and is removed by the
detector completion introduced in the proof of Theorem~\ref{thm:linear_distance}. }

\begin{definition}[Expanding matrix]
    For a binary $H \in \mathbb{F}_2^{m\times n}$, we call $H$ is $(\alpha,\beta)$-expanding if for all $|y|\leq \alpha n $ and $y\in \mathbb{F}_2^{ n}$ we always have $|H y| \geq \beta|y| $. Here $\alpha,\beta$ are real number and $\alpha <1$.

    For a matrix $A \in \mathbb{F}_2[\mathbb{Z}_\ell]^{m\times n}$ over the polynomial ring, we call $A$ is a $(\alpha,\beta)$-expanding matrix if its binary representation $\mathbb{B}(A) \in \mathbb{F}_2^{\ell m \times \ell n}$ is $(\alpha,\beta)$-expanding such that for all $|y|\leq \alpha \ell n $ and $y\in \mathbb{F}_2^{ \ell n}$ we always have $|H y| \geq \beta|y| $.
\end{definition}

\begin{theorem}[Almost-linear \zw{primal and dual fault distances}]\label{thm:linear_distance}
    Let $A\in \mathbb{F}_2 [\mathbb{Z}_\ell]^{m\times n\Delta}$ be $\Delta$-limited, i.e., every row and column of its binary representation $\mathbb B(A)$ has Hamming weight at most $\Delta$. If $A$ and $A^\mathsf T$ are $(\alpha,\beta)$-expanding \textcolor{black}{and $A(1)$ has full row rank}, then the 1-systolic and 2-cosystolic distances of the fault complex $\widehat{\text{LP}}(A,1+x,1+x)$ are at least $\gamma \ell$, where \zw{$\gamma=\frac{\alpha}{4\Delta}\min\{\beta,1\}$.}
    \zw{Here $\widehat{\text{LP}}(\cdot)$ denotes the dual detector completion of $\text{LP}(\cdot)$, obtained by setting $\widetilde{\partial}_3=(\partial_3\;Y)$, where $Y=(y_1,\ldots,y_r)$, $y_i=(j_\ell\mathbf{t}_i,0,0)^\mathsf T$, and $r=\dim\ker A(1)=n\Delta-m$. The vectors $\mathbf{t}_1,\ldots,\mathbf{t}_r$ form a basis of $\ker A(1)$. This modification leaves the cluster state connectivity $\partial_2$ and the primal detector map $\partial_1$ unchanged.}
\end{theorem}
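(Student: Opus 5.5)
We would follow the strategy of Ref.~\cite[Sec.~V]{panteleev2021degenerate}, specialised to $B=C=1+x$ in \eqref{eq:LP_parity_check}. Then $F_1=R^{n\Delta}\oplus R^{m}\oplus R^{m}$, $F_2=R^{n\Delta}\oplus R^{n\Delta}\oplus R^{m}$ and $F_3=R^{n\Delta}$ (all $m_B=n_B=m_C=n_C=1$). The maps read
\[
\partial_1(u,v,w)=Au+(1+x)v+(1+x)w,
\]
\[
\partial_2(a,b,c)=\bigl((1+x)(a+b),\;Aa+(1+x)c,\;Ab+(1+x)c\bigr),
\]
\[
\partial_3 s=\bigl((1+x)s,(1+x)s,As\bigr).
\]
The two distances are treated separately, but by the same two-step template. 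First, a \emph{reduction step} uses $\partial_2$-boundaries to push a representative onto a normal form. Second, an \emph{evaluation-at-$x=1$ step} certifies nontriviality. Third, an \emph{expansion step} bounds the weight from below.

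\textbf{Primal (1-systolic) distance.} Let $z=(u,v,w)\in\ker\partial_1$ be nontrivial. Adding $\partial_2(0,b,0)=((1+x)b,0,Ab)$ lets us absorb $w$ into $v$ modulo boundaries, so the relevant data reduce to a pair $(u,v)$ with $Au=(1+x)v'$, as in the two-factor $\mathrm{LP}(A,1+x)$ analysis. We first show that nontriviality forces $u(1)\neq0$. If $u(1)=0$, then $u=(1+x)u'$ and one can subtract a boundary to kill $u$. The leftover lies in the ``temporal'' sector, which is governed by $\operatorname{coker}A(1)$. This vanishes by full row rank, so $z$ would be trivial.

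Given $u(1)\ne0$, we argue as in Panteleev--Kalachev. Write $u$ as $\ell$ binary slices $u_j\in\mathbb F_2^{n\Delta}$ and take the boundary-adjusted representative minimizing $|z|$. If $|z|<\gamma\ell$, then some coset shift makes $|u|\le\alpha\ell n\Delta$. Expansion of $A$ then gives $|Au|\ge\beta|u|$. But $Au=(1+x)(v+w)$, and the relation $|(1+x)y|\le2|y|$ together with $\Delta$-limitedness forces $|v|+|w|\ge\frac{\beta}{2\Delta}|u|$-type bounds. The key point is that $u(1)\ne0$ means the "column sums" $\sum_j u_j$ are nonzero, so $v$ must wind around the cycle $\mathbb Z_\ell$. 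This is the repetition-code contribution, forcing $|v|+|w|\gtrsim\ell$ unless $|u|$ is large. Balancing the two regimes with the constant $\frac{\alpha}{4\Delta}\min\{\beta,1\}$ yields $|z|\ge\gamma\ell$.

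\textbf{Dual (2-cosystolic) distance.} Let $y=(a,b,c)\in\ker\widetilde\partial_3^\mathsf T$, modulo $\mathrm{im}\,\partial_2^\mathsf T$. The condition $\partial_3^\mathsf T y=0$ reads
\[
(1+x^{-1})(a+b)+A^\star c=0.
\]
The extra columns $Y$ impose $\langle j_\ell\mathbf t_i,a\rangle=0$, i.e.\ $a(1)\perp\ker A(1)$, so $a(1)\in\mathrm{im}A(1)^\mathsf T$.

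We then reduce using coboundaries $\partial_2^\mathsf T(u,v,w)=((1+x^{-1})u+A^\star v,\,(1+x^{-1})u+A^\star w,\,(1+x^{-1})(v+w))$. These symmetrize $a,b$ to a representative with, say, $b=0$. The residual nontriviality is then detected by $c(1)$, or dually by the class of $a(1)$ modulo $\mathrm{im}A(1)^\mathsf T$. The completion condition is exactly what kills the spurious short cocycles $y=(j_\ell\mathbf t,0,0)$-type classes, which have weight only $O(n\Delta)$ per slice but are nontrivial in the uncompleted complex. After this, the same minimal-representative/expansion dichotomy applies, now using $(\alpha,\beta)$-expansion of $A^\mathsf T$.

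The main obstacle is the dual side: correctly identifying the ancillary cohomology and showing that the completion $Y$ removes precisely it without creating new short cocycles. Concretely, we must verify that $\widetilde\partial_3^\mathsf T y=0$ together with nontriviality forces a winding condition $c(1)\ne0$ or $a(1)\notin\mathrm{im}$. We would first attempt this by computing $H^2$ of the iterated lift at $x=1$ via a Künneth-type decomposition. Once the nontrivial classes are pinned down, the weight bounds are routine adaptations of the primal case.
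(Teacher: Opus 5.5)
\textbf{Primal side.} Your boundary maps and your reduction of the primal case to $u(1)\neq 0$ via $\operatorname{coker}A(1)=0$ agree with the paper. The weight bound you sketch for $u(1)\neq0$ does not work, though. Applying expansion to $u$ itself only gives $|v+w|\ge\frac{\beta}{2}|u|$, which says nothing in the hard regime where $|u|$ is small. Also, $u(1)\neq0$ alone does not force $v+w$ to ``wind'': if all shifts occurring in $A$ were small, $(1+x)(v+w)=Au$ would have short solutions, and it is expansion that rules this out.

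The missing idea is partial-sum interpolation. Put $u^{(t)}=j_tu$, so that $Au^{(t)}=(1+x^t)(v+w)$. The endpoint $u^{(\ell)}=j_\ell u(1)$ is a nonzero element of $\ker A$, so it has weight at least $\alpha\ell n\Delta$. Consecutive partial sums differ by $x^{t}u$. Hence if $|u|<\alpha\ell n\Delta/2$, the first $t_0$ with $|u^{(t_0+1)}|\ge\alpha\ell n\Delta$ satisfies $\alpha\ell n\Delta/2\le|u^{(t_0)}|<\alpha\ell n\Delta$. Applying expansion to \emph{this} vector gives $|v|+|w|\ge\frac12|Au^{(t_0)}|\ge\alpha\beta\ell n\Delta/4$. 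This is where $u(1)\neq0$ actually enters; $\Delta$-limitedness plays no role on the primal side.

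\textbf{Dual side.} Here the plan is not just unfinished; the criterion you aim to prove is false. Evaluate $(1+x^{-1})(a+b)+A^\star c=0$ at $x=1$: this gives $A(1)^\mathsf T c(1)=0$. Since $A(1)^\mathsf T$ is injective, $c(1)=0$ for \emph{every} dual cocycle. The completion also forces $a(1)\in\mathrm{im}A(1)^\mathsf T$, as you correctly derived. So ``$c(1)\neq0$ or $a(1)\notin\mathrm{im}A(1)^\mathsf T$'' never holds. Yet nontrivial completed classes exist, for example $(0,j_\ell\mathbf q,0)$ with $\mathbf q\notin\mathrm{im}A(1)^\mathsf T$.

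All dual classes therefore live in the sector $c=(1+x^{-1})h$, $a+b+A^\star h=j_\ell\mathbf q$. This is the analogue of the $u(1)=0$ sector, which the primal argument discards because $\operatorname{coker}A(1)=0$; here it carries copies of $\operatorname{coker}A(1)^\mathsf T\neq0$. So a ``routine adaptation'' of the primal dichotomy has nothing to act on.

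You have also misidentified the short cocycles. They are the weight-$2$ vectors $(\mathbf e_j,\mathbf e_j,0)$, with the $\mathbf e_j$ representing $\operatorname{coker}A(1)^\mathsf T$. The vectors $(j_\ell\mathbf t_i,0,0)$ are the added detectors, which detect these cocycles via $\mathbf t_i^\mathsf T\mathbf e_j=\delta_{ij}$.

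\textbf{How the paper handles the dual side.} It does not redo an expansion argument; it uses a transfer map. In your labeling, set $\Phi(a,b,c)=(\overline c,\overline{a+b})$. This map:
\begin{itemize}
\item sends completed cocycles to cycles of $\mathrm{LP}(A^\mathsf T,1+x)$;
\item sends $\partial_2^\star(u,v,w)$ to the boundary $((1+x)s,A^\mathsf Ts)$ with $s=\overline{v+w}$;
\item is weight non-increasing, since $|a+b|\le|a|+|b|$;
\item is injective on completed cohomology.
\end{itemize}
For injectivity: if $\Phi(z)=((1+x)s,A^\mathsf Ts)$, adding $\partial_2^\star(0,\overline s,0)$ turns $z$ into $(b,b,0)$. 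The completion constraint then gives $b(1)\in\mathrm{im}A(1)^\mathsf T$, which makes $(b,b,0)$ a coboundary.

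The Panteleev--Kalachev distance bound for $\mathrm{LP}(A^\mathsf T,1+x)$ then yields $\gamma\ell$. That bound, not the primal argument, is where $\Delta$-limitedness and the constant $\frac{\alpha}{4\Delta}\min\{\beta,1\}$ come from.
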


\begin{proof}
    Because $A$ is $(\alpha,\beta)$-expanding, its kernel distance satisfies 
$d(A)\equiv
\min_{\mathbf{x}\in\ker A\backslash\{\mathbf 0\}}
|\mathbf{x}|
\geq \alpha\ell n\Delta$. 
    For a codeword $c \in \mathrm{ker} (A(1))$ such that $A(1)c=0$, we have 
    \begin{eqs}
        j_\ell A(1)c= A j_\ell c=\mathbf{0},
    \end{eqs}
    so $j_\ell c \in \mathrm{ker}(A)$. Consequently, 
    \begin{eqs}
     \alpha \ell n \Delta \leq d(A)\leq  \ell d(A(1)) \quad \Rightarrow \quad d(A(1))\geq \alpha n \Delta.
    \end{eqs}

    Let $(u,v,w)$ be a representative of a nontrivial class in $\ker\partial_1/\operatorname{im}\partial_2$. We first consider the case $u(1)\neq\mathbf{0}$.

    If $|u| \geq \frac{\alpha \ell n \Delta}{2}$, the desired bound follows. 
    Otherwise, we show that if $|u| \leq \alpha \ell n \Delta/2$, then $|v|+|w|$ is at least $\frac{\alpha \beta \ell n \Delta}{4}$. We define 
    \begin{eqs}
        u^{(t)}=j_t u, s^{(t)}=Au^{(t)},
    \end{eqs}
    where $j_t=\sum_{k=0}^{t-1}x^k$. 
    Using the kernel condition $Au+(1+x)(v+w) =\mathbf{0}$,
    \begin{eqs}
        s^{(t)}=Au j_t= (1+x)(v+w) j_t= (1+x^t) (v+w).
    \end{eqs}
    Because $|u^{(\ell)}|= |j_\ell u(1)| \geq \alpha \ell n \Delta$, there exists a minimal $t_0$ such that $|u^{(t_0+1)}| \geq \alpha \ell n \Delta$. Then
    \begin{eqs}
        u^{(t_0)}= u^{(t_0+1)}+ x^{t_0}u,
    \end{eqs}
    so
    \begin{eqs}
        |u^{(t_0)}|= |u^{(t_0+1)}+ x^{t_0}u| \geq |u^{(t_0+1)}|- |u| \geq \frac{\alpha \ell \Delta n}{2}.
    \end{eqs}
    Since $s^{(t_0)}=(1+x^{t_0})(v+w)$, we have $|v+w| \geq \frac{1}{2} |s^{(t_0)}|$. Because $A$ is $(\alpha,\beta)$-expanding and $ \alpha \ell  n \Delta/2 \leq |u^{(t_0)}| \leq \alpha \ell n \Delta$, we obtain $|s^{(t_0)}|=|A u^{(t_0)}|\geq \beta |u^{(t_0)}|$.
    Therefore
    \begin{eqs}
        |v|+|w| \geq |v+w|  \geq \frac{1}{2} |s^{(t_0)}|=\frac{1}{2}|A u^{(t_0)}|\geq \frac{\beta}{2} |u^{(t_0)}|\geq  \frac{\alpha \beta \ell n \Delta}{4}.
    \end{eqs}
    Combining the two subcases, we have 
    \zw{\begin{eqs}
    |(u,v,w)|\geq \gamma_1 \ell n\geq\gamma_1 \ell,\qquad\gamma_1=\min\left\{\frac{\alpha \Delta }{2},
    \frac{\alpha\beta \Delta}{4}\right\}.
    \end{eqs}}
    \textcolor{black}{which gives the lower bound of primal fault distance.}

\textcolor{black}{It remains to show that every nontrivial class satisfies $u(1)\neq 0$. Suppose that $u(1)=0$, so $u=(1+x)h$ for some $h\in R^{n\Delta}$, and set $p=v+Ah$. Adding the boundary $\partial_2(h,0,0)$ gives $(u,v,w)\sim(0,p,w)$. The cycle condition implies $p+w=j_\ell q_0$ for some $q_0\in\mathbb{F}_2^m$. Since $A(1)$ is surjective, choose $k\in\mathbb{F}_2^{n\Delta}$ such that $A(1)k=q_0$. Adding $\partial_2(j_\ell k,0,0)=(0,j_\ell q_0,0)$ then reduces $(0,p,w)$ to $(0,w,w)$. Next choose $k'\in\mathbb{F}_2^{n\Delta}$ such that $A(1)k'=w(1)$. Since $(w+Ak')(1)=0$, there exists $r_0\in R^m$ such that $w+Ak'=(1+x)r_0$. Therefore,
\begin{eqs}
    (u,v,w)\sim(0,p,w)\sim(0,w,w)=\partial_2(k',k',r_0).
\end{eqs}
Hence the original class is trivial. Consequently, every nontrivial undetectable primal error satisfies $u(1)\neq 0$.}

\zw{
Since $A(1)$ has full row rank, the raw dual sector contains
$r=n\Delta-m$ short classes with weight-$2$ representatives
$\eta_j=(\mathbf e_j,\mathbf e_j,0)\in F_2$, where the
$\mathbf e_j$ are standard basis vectors whose cosets form a
basis of $\operatorname{coker}A(1)^\mathsf T$. Choose
$\mathbf t_1,\ldots,\mathbf t_r$ as the dual basis of
$\ker A(1)$ under the natural pairing. To eliminate these short cocycles, we add the dual detectors
$y_i=(j_\ell\mathbf t_i,0,0)^\mathsf T$, where
$\mathbf t_1,\ldots,\mathbf t_r$ form a basis of
$\ker A(1)$ and $r=n\Delta-m$. Writing
$K=(\mathbf t_1,\ldots,\mathbf t_r)
\in\mathbb F_2^{n\Delta\times r}$
and $Y=(y_1,\ldots,y_r)$, we define the completed dual detector matrix as
\begin{eqs}
\widetilde{\partial}_3^\mathsf T=
\begin{pmatrix}
\partial_3^\mathsf T\\
Y^\mathsf T
\end{pmatrix}.
\end{eqs}
These are valid detectors because
$\partial_2y_i=(0,j_\ell A(1)\mathbf t_i,0)^\mathsf T=0$.
Choose representatives $\mathbf e_1,\ldots,\mathbf e_r$ whose
cosets form a basis of $\operatorname{coker}A(1)^\mathsf T$, and choose
$\mathbf t_1,\ldots,\mathbf t_r$ as the dual basis in
$\ker A(1)$ under the natural pairing. Then, for the short
cocycles $\eta_j=(\mathbf e_j,\mathbf e_j,0)$,
\begin{eqs}
    \langle y_i,\eta_j\rangle_{\mathbb F_2}
=\mathbf t_i^\mathsf T\mathbf e_j=\delta_{ij}.
\end{eqs}
Consequently, the added dual detectors detect the entire
short-cocycle sector.}

\zw{This procedure can be viewed as gauging out the additional
logical degrees of freedom without introducing extra quantum
overhead: the cluster state connectivity $\partial_2$ remains
unchanged, and the added detectors are obtained as parities of
existing single-qubit $X$-measurement outcomes in the MBQC setting rather than
through additional high-weight stabilizer measurements. The
completed dual detectors define the extended boundary map
$\widetilde{\partial}_3$, and we denote the resulting fault
complex by
$\widehat{\operatorname{LP}}(A,1+x,1+x)$, whose dual detector
matrix is $\widetilde{\partial}_3^\mathsf T$.

For $f\in R$, write $\overline{f}=f(x^{-1})$.
After gauging out the short-cocycle sector, define
\(\Phi(a,c,d)=(\overline{d},\overline{a+c})\) for a completed
dual cocycle \(z=(a,c,d)\). The cocycle relation implies that
\(\Phi(z)\) is a cycle of
\(\operatorname{LP}(A^{\mathsf T},1+x)\). Moreover, for every
coboundary,
\begin{eqs}
\Phi\partial_2^\star(u,v,w)
=
\bigl((1+x)\overline{v+w},
A^{\mathsf T}\overline{v+w}\bigr),
\end{eqs}
which is a boundary of
\(\operatorname{LP}(A^{\mathsf T},1+x)\).

Moreover, if
$[\Phi(z)]=0$, then $z$ can be reduced to $(a,a,0)$. The
completed detector constraint gives
$K^\mathsf T a(1)=0$, implying
$a(1)\in\operatorname{im}A(1)^\mathsf T$ and hence that $(a,a,0)$ is
also a coboundary. Hence, $\Phi$ is a weight-nonincreasing
injection on the completed dual cohomology.
Since $A^\mathsf T$ is $\Delta$-limited and both
$A^\mathsf T$ and $A$ are $(\alpha,\beta)$-expanding, the
single-lifted-product distance bound
\cite{panteleev2021quantum}
applied to $A^\mathsf T$ gives
\begin{eqs}
d_1\!\left(\operatorname{LP}(A^\mathsf T,1+x)\right)
\geq \frac{\alpha}{4\Delta}\min\{\beta,1\}\ell \equiv \gamma_2 \ell.
\end{eqs}
Consequently, the fault distance of the detector-completed spacetime-lifted fault complex $\widehat{\mathrm{LP}}(A,1+x,1+x)$ is lower-bounded by $\gamma\ell$ where $\gamma=\min\{\gamma_1,\gamma_2\}$, indicating the claimed result.}
\end{proof}

\zw{\begin{theorem}\label{thm:DLP_PK_generalization}
For any $\ell>1$, let
$A\in\mathbb F_2[\mathbb Z_\ell]^{m\times n\Delta}$
be a $\Delta$-limited matrix, where
$n=\Theta(\log\ell)$
and
$m\leq n\Delta/2$, with the implicit constants and $\Delta$ independent of
$\ell$. Suppose that $A$ and $A^\mathsf T$ are
$(\alpha,\beta)$-expanding and that $A(1)$ has full row rank.
Then the detector-completed fault complex
$\widehat{\mathrm{LP}}(A,1+x,1+x)$ has fault distance
\begin{eqs}
d\geq\gamma\ell
 =\Omega\left(\frac{N}{\log N}\right),
\end{eqs}
where $\gamma$ is defined in
Theorem~\ref{thm:linear_distance} and 
$N=3\ell(n\Delta+m)$, 
composed of $\ell(n \Delta +2m)$ primal qubits and $\ell(2n \Delta  +m)$ dual qubits.
\end{theorem}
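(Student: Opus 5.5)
The plan is to reduce everything to Theorem~\ref{thm:linear_distance} and then do a dimension count. Since the hypotheses here ($\Delta$-limited $A$, $A$ and $A^\mathsf T$ both $(\alpha,\beta)$-expanding, $A(1)$ of full row rank) are exactly those of Theorem~\ref{thm:linear_distance}, that result immediately gives that both the 1-systolic (primal) and the completed 2-cosystolic (dual) distances of $\widehat{\mathrm{LP}}(A,1+x,1+x)$ are at least $\gamma\ell$ with $\gamma=\frac{\alpha}{4\Delta}\min\{\beta,1\}$. Because $\Delta,\alpha,\beta$ are independent of $\ell$, $\gamma$ is a positive constant, and we get $d\ge\gamma\ell=\Omega(\ell)$. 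What remains is to relate $\ell$ to the total spacetime cost $N=\dim F_1+\dim F_2$.

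\textbf{Step 1: count dimensions.} Specialize \eqref{eq:LP_parity_check} to $B=C=1+x$, so $m_B=n_B=m_C=n_C=1$, and take $m_A=m$, $n_A=n\Delta$.
\begin{itemize}
\item $F_1$ (the column space of $\partial_1$) has $R$-rank $n_A m_Bm_C+m_An_Bm_C+m_Am_Bn_C=n\Delta+2m$.
\item $F_2$ (the column space of $\partial_2$) has $R$-rank $n_An_Bm_C+n_Am_Bn_C+m_An_Bn_C=2n\Delta+m$.
\end{itemize}
Each $R$-coordinate expands into $\ell$ binary coordinates, so there are $\ell(n\Delta+2m)$ primal qubits and $\ell(2n\Delta+m)$ dual qubits, for a total of $N=3\ell(n\Delta+m)$. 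The detector completion only appends columns to $\partial_3$, which is $Y$ living in the new degree-3 space. It leaves $F_1$, $F_2$, and $\partial_2$ untouched, so the count is unaffected, as already noted after Theorem~\ref{thm:linear_distance}.

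\textbf{Step 2: convert $\ell$ into $N/\log N$.} From $m\le n\Delta/2$ and $n=\Theta(\log\ell)$ we get $N=\Theta(\ell n\Delta)=\Theta(\ell\log\ell)$, with constants independent of $\ell$.
\begin{itemize}
\item Taking logarithms gives $\log N=\log\ell+O(\log\log\ell)=\Theta(\log\ell)$.
\item Hence $\ell=\Theta(N/\log\ell)=\Theta(N/\log N)$.
\end{itemize}
Combining with Step 0 yields $d\ge\gamma\ell=\Omega(N/\log N)$. Note that $m\le n\Delta/2$ is needed only for the upper bound $N=O(\ell n\Delta)$. The lower bound $N\ge 3\ell n\Delta$ holds trivially.

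\textbf{Main obstacle.} The analytic content is entirely in Theorem~\ref{thm:linear_distance}, so the proof itself is mostly bookkeeping. The delicate point is to confirm that "fault distance" here means $\min\{d_{\text{primal}},d_{\text{dual}}\}$ computed for the completed complex, which is what Theorem~\ref{thm:linear_distance} bounds. If one also wants existence of such $A$ for every $\ell>1$, I would appeal to the random quasi-cyclic expander construction of Ref.~\cite{panteleev2021quantum}, which supplies $\Delta$-limited $A$ with $n=\Theta(\log\ell)$ and both $A,A^\mathsf T$ expanding. I would then argue that full row rank of $A(1)$ can be arranged, for example by discarding dependent rows of $A(1)$ together with the corresponding rows of $A$. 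This preserves $\Delta$-limitedness and $m\le n\Delta/2$, and I expect it to preserve expansion of $A$ up to constants. The statement, however, takes these properties as hypotheses, so this is only a remark.
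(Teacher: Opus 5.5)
Your proposal is correct and follows the same route as the paper. Theorem~\ref{thm:linear_distance} supplies $d\ge\gamma\ell$ for the completed complex, and the block dimensions of $\partial_1,\partial_2$ give $\ell(n\Delta+2m)$ primal and $\ell(2n\Delta+m)$ dual qubits (unchanged by the completion), so $m\le n\Delta/2$ and $n=\Theta(\log\ell)$ yield $\ell=\Theta(N/\log N)$; your closing remark on arranging full row rank of $A(1)$ by deleting rows is unnecessary, since the paper also takes it as a hypothesis, and such deletions would not automatically preserve the expansion of $A$.
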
}

\zw{The construction of Ref.~\cite[Proposition~1 and the proof of
Theorem~1]{panteleev2021quantum} provides $\Delta$-limited
matrices for which $A$ and $A^\mathsf T$ are expanding. The detector completion changes neither
the cluster state connectivity nor the number of cluster state
qubits.}

\section{Proof of fault distance for spacetime-lifted memory experiment}\label{appendix:memory_distance}
To prove that adding the initial and end time slices will not decrease the distance, we first define a bulk projection map which will be used in the proof of following theorems.

\zw{\begin{definition}[Bulk projection and inclusion]
\label{def:temporal_projection}
Let $R=\mathbb F_2[\mathbb Z_\ell]$ and $b=1+x$. For
convenience, denote the periodic and thickened fault complexes by
\begin{eqs}
\mathcal F_b=\operatorname{LP}(A,b,b),
\qquad
\mathcal F_C=\operatorname{LP}(A,b,C),
\end{eqs}
respectively. We denote their boundary maps by
$\partial_{k,b}$ and $\partial_{k,C}$, respectively.

On the primal space, the \emph{bulk projection} is
\begin{eqs}
P_1:
(u,u_\alpha,u_\beta;\,
v,v_\alpha,v_\beta;\,
w,w_\alpha,w_\beta)
\longmapsto
(u,v,w).
\end{eqs}
It retains only the bulk components. The corresponding inclusion
is \(I_1:(u,v,w)\mapsto
(u,0,0;\,v,0,0;\,w,w,w)\).

On the dual space, write the thickened blocks as
\((a_0,a_\alpha,a_\beta)\), \((c_0,c_\alpha,c_\beta)\), and
\((d_0,d_\alpha,d_\beta)\). The projection \(P_2\) retains the
first component of each block, so that
\(P_2z=(a_0,c_0,d_0)\), while the corresponding inclusion is
\(I_2:(a,c,d)\mapsto(a,0,0;\,c,c,c;\,d,d,d)\).
\end{definition}}

\begin{figure*}
    \centering
    \includegraphics[width=0.8\textwidth]{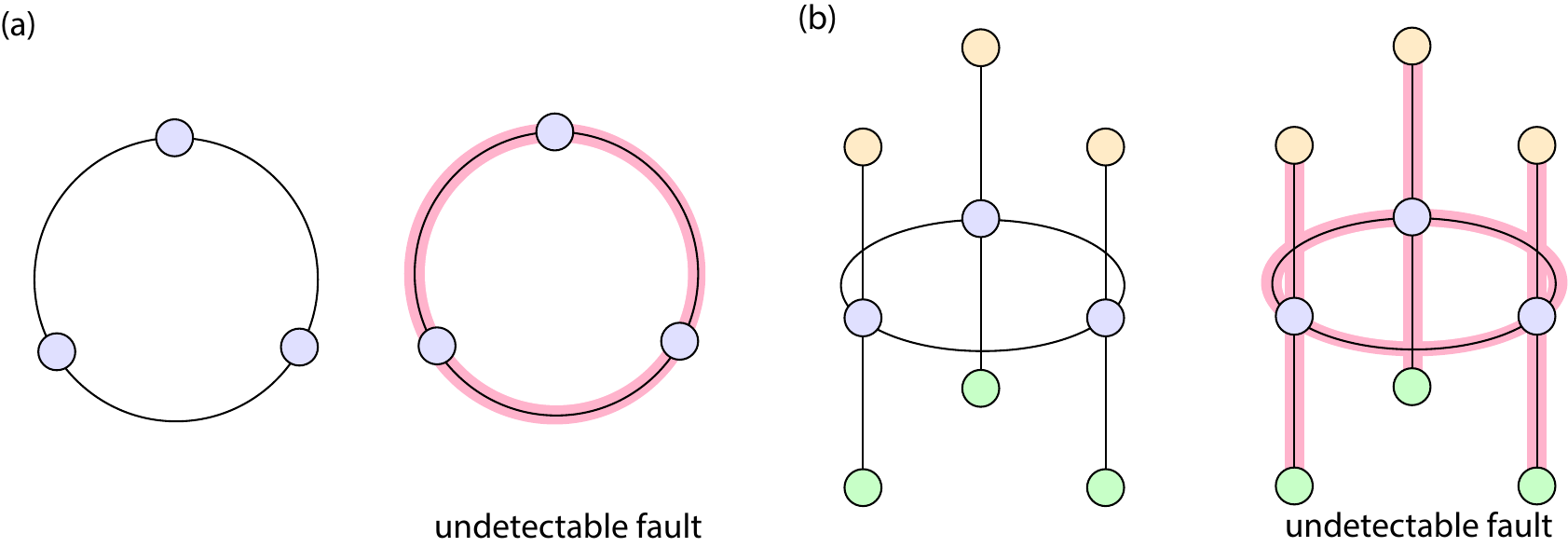}
    \caption{The third classical code $C$ used in the spacetime-lifted product $\text{LP}(A,B,C)$, where an edge represents a classical check, and a node represents a classical bit. (a) cyclic repetition code with parity check matrix $1+x$. The right figure represents the undetectable fault along the direction of $1+x$ which is proportional to $j_\ell$ with weight being at least $\ell$. (b) The `thickened' the cyclic repetition code by introducing the top and bottom portal whose parity check matrix is $C$. The right figure represents an undetectable fault which includes non-trivial action along the cyclic repetition component $1+x$, hence the undetectable fault has weight at least $3\ell$.}
    \label{fig:repetition_portal}
\end{figure*}

\zw{The key idea is that the temporal complex defined by $C$ is
chain-isomorphic to the direct sum of the periodic repetition
complex defined by $b=1+x$ and two contractible identity
complexes. Tensoring this decomposition with the first two
factors of the lifted product gives the projection and inclusion
maps $P$ and $I$ defined above. These maps induce inverse
isomorphisms on homology and cohomology, while their explicit
weight bounds allow us to transfer the distance of
$\mathcal F_b$ to $\mathcal F_C$ up to a constant factor.

It remains to extend this decomposition to the detector-completed
complexes. Let
$\widehat{\mathcal F}_b
=\widehat{\operatorname{LP}}(A,1+x,1+x)$
be the completed periodic fault complex from
Theorem~\ref{thm:linear_distance}. After binary expansion, let
$Y:G\to F_{2,b}$ denote its additional dual detector map, where
$G\cong\mathbb F_2^r$. Define
\begin{eqs}
Y_C=I_2Y,
\qquad
\widetilde{\partial}_{3,C}
=
\begin{pmatrix}
\partial_{3,C}&Y_C
\end{pmatrix}.
\end{eqs}
The chain map relation
$\partial_{2,C}I_2=I_1\partial_{2,b}$ gives
\begin{eqs}
\partial_{2,C}Y_C
=
\partial_{2,C}I_2Y
=
I_1\partial_{2,b}Y
=
0.
\end{eqs}
Therefore, $\widetilde{\partial}_{3,C}$ defines a valid detector-completed
thickened fault complex, which we denote by
$\widehat{\mathcal F}_C
=\widehat{\operatorname{LP}}(A,1+x,C)$.
This completion modifies only the dual detector map and leaves
the cluster state connectivity $\partial_{2,C}$ unchanged.

Furthermore, we show that temporal thickening preserves the
distances associated with both the first homology
$\ker(\partial_{1,C})/\operatorname{im}(\partial_{2,C})$
and the completed second cohomology
$\ker(\widetilde{\partial}_{3,C}^{\mathsf T})/
\operatorname{im}(\partial_{2,C}^{\mathsf T})$ of
$\widehat{\operatorname{LP}}(A,1+x,C)$.

\begin{theorem}
\label{thm:distance_double_LP_portal}
For a matrix $A$ satisfying the conditions of
Theorem~\ref{thm:linear_distance}, the detector-completed thickened fault
complex $\widehat{\operatorname{LP}}(A,1+x,C)$ has both
$1$-systolic and $2$-cosystolic distances at least
$\gamma\ell$. Consequently, replacing the temporal repetition
code $1+x$ by the thickened repetition code 
$C=
\begin{pmatrix}
1+x&0&0\\
1&1&0\\
1&0&1
\end{pmatrix}
$ 
does not change the fault distance scaling.
\end{theorem}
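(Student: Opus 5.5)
The plan is to transfer the distance bound of Theorem~\ref{thm:linear_distance} from $\widehat{\mathcal F}_b$ to $\widehat{\mathcal F}_C$. We use the chain maps $P=(P_1,P_2,\dots)$ and $I=(I_1,I_2,\dots)$ of Definition~\ref{def:temporal_projection}, extended by the obvious projections and inclusions on $F_0$ and $F_3$.

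\textbf{Step 1: a chain equivalence in the temporal factor.} Write the complex of $C$ as $R^3\xrightarrow{C}R^3$. Change bases by the unimodular matrices $E=\begin{pmatrix}1&0&0\\1&1&0\\1&0&1\end{pmatrix}$ on the target and the identity on the source. Then $E^{-1}C$ is block diagonal, $\mathrm{diag}(1+x,1,1)$, up to the off-diagonal terms, which row operations remove. Hence $\mathcal C\cong\mathcal R_\ell\oplus(R\xrightarrow{1}R)^{\oplus2}$, and the identity summands are contractible. Because the lifted product over $R$ is additive and the product with a contractible complex is contractible, tensoring with $\mathcal A\times_R\mathcal R_\ell$ gives $\mathcal F_C\simeq\mathcal F_b$. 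The explicit maps realizing this are $P$ and $I$.

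I would verify directly the identities $\partial_{k,C}I_k=I_{k-1}\partial_{k,b}$ and $\partial_{k,b}P_k=P_{k-1}\partial_{k,C}$. I would also check $P_kI_k=\mathrm{id}$ and $I_kP_k\simeq \mathrm{id}$ via an explicit homotopy built from the identity blocks of $C$. Consequently $P_*$ and $I_*$ are mutually inverse on $H_1$ and on $H^2$, and the same holds for the transposes on cohomology.

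\textbf{Step 2: the completed dual sector.} We have $Y_C=I_2Y$ and $P_2Y_C=Y$. Therefore a completed cocycle $z$ of $\widehat{\mathcal F}_C$, meaning one with $\widetilde\partial_{3,C}^\mathsf T z=0$, is sent by $I_2^\mathsf T$ to a completed cocycle of $\widehat{\mathcal F}_b$. This uses $Y^\mathsf T I_2^\mathsf T z=Y_C^\mathsf Tz=0$ and the chain relation for $\partial_3$. Coboundaries go to coboundaries, and the homotopy shows the induced map on completed cohomology is injective. For injectivity, note that a cocycle mapping to zero is cohomologous to $P_2^\mathsf T$ of a coboundary.

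\textbf{Step 3: weights.} On the primal side, $P_1$ only deletes coordinates, so $|P_1x|\le|x|$. A nontrivial class $[x]$ of $\mathcal F_C$ satisfies $[P_1x]\neq0$, since $I_1P_1\simeq\mathrm{id}$. Hence $|x|\ge|P_1x|\ge\gamma\ell$ by Theorem~\ref{thm:linear_distance}. On the dual side, $I_2^\mathsf T(a_0,a_\alpha,a_\beta;c_0,c_\alpha,c_\beta;d_0,\dots)=(a_0,\,c_0+c_\alpha+c_\beta,\,d_0+d_\alpha+d_\beta)$, which is weight-nonincreasing. A nontrivial completed class $z$ therefore satisfies $|z|\ge|I_2^\mathsf Tz|\ge\gamma\ell$. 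Note that we must use the transposes $I^\mathsf T$ and $P^\mathsf T$ appropriately on each side, namely $P$ for homology and $I^\mathsf T$ for cohomology. This choice makes both maps coordinate-sum maps and hence weight-nonincreasing.

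\textbf{Main obstacle.} I expect the hard part to be the injectivity statements, that is, proving the nontrivial classes do not die under these maps. This is most delicate for the completed cohomology, where the added detectors $Y_C$ must interact correctly with the homotopy. My first attempt would be to write the homotopy $h$ explicitly with $I P=\mathrm{id}+\partial h+h\partial$. I would then check that $h^\mathsf T$ preserves the kernel of $Y_C^\mathsf T$. Since $Y$ is supported only on the bulk $a$-block, where $h$ acts trivially, this preservation should reduce to a direct computation.
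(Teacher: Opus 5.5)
Your proposal is correct and is essentially the paper's proof. Like the paper, you split the temporal complex into the periodic repetition complex plus two contractible identity complexes, transport classes along $P$ and $I$ with $Y_C=I_2Y$, and bound weights using $P_1$ on the primal side and $I_2^{\mathsf T}$ on the dual side; the paper additionally records the matching upper bounds via $I_1$ and $P_2^{\mathsf T}$.

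Two small corrections. First, the splitting requires a change of basis on the source, $CE=\operatorname{diag}(1+x,1,1)$, not row operations: no left multiplication turns $C$ into $\operatorname{diag}(1+x,1,1)$, because its first column $(1+x,1,1)^{\mathsf T}$ is not a multiple of $1+x$. Second, your ``main obstacle'' is vacuous. Since $\ker\widetilde\partial_{3,C}^{\mathsf T}\subseteq\ker\partial_{3,C}^{\mathsf T}$ with the same coboundaries, injectivity of $I_2^{\mathsf T}$ on completed cohomology follows directly from the uncompleted case, once you note $Y^{\mathsf T}I_2^{\mathsf T}=Y_C^{\mathsf T}$.
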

}

\zw{
\begin{proof}
Let $b=1+x$ and define
\begin{eqs}
V=
\begin{pmatrix}
1&0&0\\
1&1&0\\
1&0&1
\end{pmatrix},
\qquad
V^{-1}=V,
\qquad
CV=\operatorname{diag}(b,1,1).
\end{eqs}
Hence the temporal complex defined by $C$ is chain-isomorphic
to the direct sum of the periodic repetition complex defined by
$b$ and two contractible identity complexes.

Tensoring this decomposition with the identity acting on
$\operatorname{LP}(A,b)$ gives chain maps $P$ and $I$, whose
degree-$1$ and degree-$2$ components are the projection and
inclusion maps defined above. They satisfy
\begin{eqs}
\partial_{k,C}I_k
=
I_{k-1}\partial_{k,b},
\qquad
P_{k-1}\partial_{k,C}
=
\partial_{k,b}P_k,
\qquad
P_kI_k
=
\operatorname{id}.
\end{eqs}
Since the other two summands are contractible, $P$ and $I$
induce inverse isomorphisms on homology and cohomology.

On the primal space, $P_1$ is weight non-increasing, while
$I_1$ increases the weight by at most a factor of three. Since
detector completion changes only the degree-$3$ boundary map, it
leaves the primal homology unchanged. Therefore,
\begin{eqs}
d_1\!\left(\widehat{\operatorname{LP}}(A,b,b)\right)
\leq
d_1\!\left(\widehat{\operatorname{LP}}(A,b,C)\right)
\leq
3d_1\!\left(\widehat{\operatorname{LP}}(A,b,b)\right).
\end{eqs}

It remains to check that the detector completion is compatible
with this decomposition. By construction, $Y_C=I_2Y$, and
\begin{eqs}
\partial_{2,C}Y_C
=
\partial_{2,C}I_2Y
=
I_1\partial_{2,b}Y
=
0.
\end{eqs}
Moreover, $P_2Y_C=P_2I_2Y=Y$. Thus $P$ and $I$ extend to the
detector-completed complexes by acting identically on the added detector
generators. Since $Y_C$ lies entirely in the periodic summand,
the other two summands remain contractible.

The transpose map $I_2^{\mathsf T}$ is weight non-increasing,
while $P_2^{\mathsf T}$ is a weight-preserving coordinate
inclusion. Since they induce inverse isomorphisms on the
completed second cohomology,
\begin{eqs}
d^2\!\left(\widehat{\operatorname{LP}}(A,b,C)\right)
=
d^2\!\left(\widehat{\operatorname{LP}}(A,b,b)\right).
\end{eqs}
Theorem~\ref{thm:linear_distance} therefore gives
\begin{eqs}
d_1\!\left(\widehat{\operatorname{LP}}(A,b,C)\right)
\geq\gamma\ell,
\qquad
d^2\!\left(\widehat{\operatorname{LP}}(A,b,C)\right)
\geq\gamma\ell,
\end{eqs}
which proves the claim.
\end{proof}
}

This fault complex \textcolor{black}{$\widehat{\mathrm{LP}}(A,1+x, C)$} exhibits only membrane-like dual logical correlations of the form $(u,\mathbf{0},\mathbf{0},u,u,u,v,v,v)$ where $u,v$ satisfy $Au + (1+x)v= \mathbf{0}$. The corresponding dual logical correlation operator takes the form $X(u,v)^{\otimes 3} \cdot X(u,\mathbf{0},\mathbf{0},\mathbf{0},\mathbf{0},\mathbf{0},\mathbf{0},\mathbf{0},\mathbf{0})$, where $X(u,v)^{\otimes 3}$ correlates the three temporal layers. Although this is an $X$-type operator acting on dual qubits, its support $(u,v)$ coincides with the support of the logical $\overline{Z}$ operators of the lifted product code $\text{LP}(A,1+x)$. This reflects the correspondence between primal/dual sectors in the fault complex picture and $X/Z$ sectors in the circuit picture. Consequently, the protocol can be interpreted as a spacetime-lifted memory experiment in which all qubits are initialized in the $Z$ basis and finally measured in the $Z$ basis.

\zw{\textbf{Parameters of the spacetime-lifted fault complex.}
Similar to Theorem~\ref{thm:DLP_PK_generalization}, the
detector-completed fault complex
$\widehat{\operatorname{LP}}(A,1+x,C)$ has almost-linear fault
distance. The number of primal qubits is
$3\ell(n\Delta+2m)$, while the number of dual qubits is
$3\ell(2n\Delta+m)$. Thus the total number of qubits in the
fault-tolerant cluster state is
\begin{eqs}
N=9\ell(n\Delta+m),
\end{eqs}
with each input/output portal having size
$\ell(n\Delta+m)$. Combining
$\log N=\Theta(\log\ell)$, $n=\Theta(\log N)$, and
Theorem~\ref{thm:distance_double_LP_portal}, we find that the
number of encoded logical qubits is $K=\Theta(\log N)$ and that
the fault distance satisfies
\begin{eqs}
d\geq\gamma\ell
=
\Omega\left(\frac{N}{\log N}\right).
\end{eqs}}

\section{1D cluster state example for logical teleportation}\label{appendix:1D_cluster}

In this appendix, we use 1D cluster state as a minimal example to demonstrate that a fault complex can support logical correlation even without the presence of nontrivial dual correlations. This clarifies the distinction between the logical $\overline{Z_{\text{in}}Z_{\text{out}}}$ correlation required for teleportation and the notion of dual correlation in the fault complex formalism, thereby reconciling our conditions for logical teleportation. For simplicity, we consider the protocol on a 1D cluster state along a line whose two boundary qubits are entangled after measuring the bulk, although the same analysis extends directly to general fault complexes for logical teleportation of encoded logical qubits.

\begin{example}[1D cluster state]
    We consider a length-$(2L+1)$ 1D cluster state with fault complex 
    \[
\begin{tikzcd}[row sep=0.7em, column sep=1em]
F_2 \arrow[r, "\partial_2"] &
F_1,  \\
{\scriptstyle \text{dual fault position}} &
{\scriptstyle \text{primal fault position}} 
\end{tikzcd}
\]
which is a 2-term complex, because there is no primal and dual detectors. This process is a foliation of a single physical qubit, which has no nontrivial parity-check matrices $H_X$ or $H_Z$.

We assign qubits on odd sites to the primal sector $F_1$ and qubits on even sites to the dual sector $F_2$. The boundary maps $\partial_3$ and $\partial_1$ are zero maps and trivial, while $\partial_2$ specifies the connectivity of the cluster state:
\begin{eqs}
    \partial_2=\begin{pmatrix}
        1 & 0& 0& \cdots & 0& 0\\
        1 & 1& 0& \cdots & 0& 0\\
        0 & 1& 1& \cdots & 0& 0\\
        \vdots & \vdots & \vdots & \ddots & \vdots& \vdots \\
        0 & 0& 0& \cdots & 1& 1\\
        0 & 0& 0& \cdots & 0& 1
    \end{pmatrix},
\end{eqs}
which is the transpose of the parity-check matrix of a length-$(L+1)$ repetition code with open boundary conditions. We take $\partial_2$ to be the transpose of the repetition-code matrix so that the two boundary qubits belong to the primal sector. If instead one uses the standard repetition-code parity-check matrix, the resulting teleportation protocol would involve two boundary dual qubits; the two choices are equivalent up to swapping the roles of primal and dual qubits.

Such boundary map is a  $(L+1) \times L$ binary matrix where each row corresponds to a primal qubit, and each column corresponds to a dual qubit. The first and last rows contain only a single nonzero entry because the boundary primal qubits each couple to only one neighboring dual qubit.

The corresponding cluster state stabilizer group is
\begin{eqs}
    \langle X_1 Z_2, Z_{i-1} X_{i} Z_{i+1}, Z_{2L} X_{2L+ 1} \rangle,\quad \forall 2\leq i\leq 2L,
\end{eqs}
which can be written as a more compact form $\langle X_a Z_{\partial_2 a}, X_b Z_{\partial_2^\mathsf T b} \rangle $ for all $a \in F_2, b \in F_1$ as \eqref{eq:cluster_stabilizer}. Note that the system contains $2L+1$ qubits in total. Here $a$ and $\partial_2^\mathsf T b$ are $L$-dimensional binary vectors associated with the dual qubits, while $b$ and $\partial_2 a$ are $(L+1)$-dimensional binary vectors associated with the primal qubits. When used as subscripts of Pauli operators, these vectors act separately on the primal and dual sectors. For example, the compact form of cluster state stabilizer should read as
\begin{eqs}
X_a Z_{\partial_2 a}
:=
X_{a,\text{dual}}
\otimes
Z_{\partial_2 a,\text{primal}},\quad \forall  a \in F_2\\
X_b Z_{\partial_2^\mathsf T b}
:=
X_{b,\text{primal}}
\otimes
Z_{\partial_2^\mathsf T b,\text{dual}},\quad \forall b \in F_1
\end{eqs}
where the additional labels “primal” and “dual” indicate the corresponding subsets of qubits.

We choose the first and last qubits as the two portals $Q_{\text{in}},Q_{\text{out}}$, and we denote $X_{\text{in}}:=X_1, Z_{\text{in}}:=Z_1, X_{\text{out}}:= X_{2L+1}, Z_{\text{out}}:= Z_{2L+1}$. All other qubits are measured in the single-qubit $X$ basis and give measurement outcome $m_i=\pm 1$ for $2 \leq i\leq 2L$. Indeed, most cluster state stabilizers anticommute with the single-qubit measurements and are therefore destroyed, leaving only two independent stabilizers that generate the Bell correlations:
\begin{eqs}
\langle \varphi_X X_{\text{in}} X_{\text{out}},  \varphi_Z Z_{\text{in}} Z_{\text{out}}\rangle 
\end{eqs} 
where $\varphi_X=\prod_{  \substack{3 \leq i \leq 2L-1\\ i \in \text{odd}}}m_{i}$ and $\varphi_Z=\prod_{ i\in \text{even}} m_{i}$ are $\pm 1$ signs given by the measurement outcomes. The logical $X_{\text{in}}X_{\text{out}}$ correlation originates from the primal correlation corresponding to the nontrivial element of $\ker (\partial_2^\mathsf T)$, namely the all-one vector \textcolor{black}{$b=\mathbf{1}_{L+1}$}. Explicitly, the all-one vector \textcolor{black}{$b=\mathbf{1}_{L+1}$} means taking the product of the cluster state stabilizers  supported on primal qubits gives
\begin{eqs}
    X_b Z_{\partial_2^\mathsf T b}= \bigotimes_{i \in \text{odd}}X_i.
\end{eqs}
After measuring the bulk primal qubits, the measured $X_i$ operators are replaced by their outcomes $m_i$, leaving the effective logical correlation
\begin{eqs}
    \bigotimes_{i \in \text{odd}}X_i \rightarrow \varphi_X X_{\text{in}} X_{\text{out}}.
\end{eqs}
The logical $\varphi_Z Z_{\text{in}}Z_{\text{out}}$ correlation arises from cluster state stabilizers \textcolor{black}{$X_a Z_{\partial_2 a}$ with $a=\mathbf{1}_{L} \notin \mathrm{ker}(\partial_2)$}. Such cluster state stabilizer commute with all the single-qubit $X$-measurement within the bulk and is a product of $X$ operator acting on bulk dual qubits together with a $Z_1Z_{2L+1}$ part supported on the two portals which is a product of the cluster state stabilizers $Z_{i-1} X_i Z_{i+1}$ centered on all the dual qubits (even sites $i$)
\begin{eqs}
X_{a}  Z_{\partial_2 a}= Z_{\text{in}} Z_{\text{out}} \bigotimes_{i \in \text{even}} X_i.
\end{eqs}
The key point is that the $Z$ component of this stabilizer has support only on the two portals, since
\begin{eqs}
\partial_2 \mathbf{1}_L= \begin{pmatrix}
    1 & 0& 0& \cdots & 1
\end{pmatrix}
\end{eqs}
only has $1$ on the first and last column and gives the $Z_1 Z_{2L+1}$ acting on the portals, which are the first and last primal qubits respectively.
After measuring all the dual qubits, the measured $X_i$ operators for even $i$ are replaced by their outcomes $m_i$, which gives the logical correlation
\begin{eqs}
    Z_{\text{in}}Z_{\text{out}} \bigotimes_{i \in \text{even}} X_i \rightarrow \varphi_Z Z_{\text{in}}Z_{\text{out}},
\end{eqs}
matching the conditions we stated in the main text.

We emphasize that $ZZ$ correlation is different from the dual correlation of the fault complex. Such fault complex does not possess any nontrivial dual correlation. Because $\partial_2$ is a $(L+1)\times L$ full-row-rank matrix, indeed, $\ker (\partial_2)=\{\mathbf{0}_{L}\}$ is trivial. This reflects the fact that dual correlations in the fault complex formalism correspond to pure-$X$ operators supported solely on dual qubits in the cluster state stabilizer group, but such operator doesn't exist in this example. On the other hand, for logical teleportation, we should look for logical correlations that are in the cluster state stabilizer group and can be written as $\overline{X_{\text{in}}X_{\text{out}}}$ and $\overline{Z_{\text{in}}Z_{\text{out}}}$ acting on the input and output portals, and Pauli $X$ acting on the measured bulk part. 

This analysis holds true for general fault complexes, including cases where the detector matrices $\partial_1$ and $\partial_3^\mathsf T$ are nontrivial. For example, foliated fault complexes of CSS codes support the logical $\overline{X_{\text{in}}X_{\text{out}}}$ and $\overline{Z_{\text{in}}Z_{\text{out}}}$ correlations required for fault-tolerant logical teleportation. However, because the temporal repetition code appearing in the foliated construction has open boundary conditions, the resulting fault complex cannot simultaneously possess both nontrivial primal and dual correlations between portals.
\end{example}

\end{document}